\Crefname{figure}{Figure}{Figures}
\Crefname{equation}{Equation}{Equations}
\theoremstyle{plain}
\newtheorem{theorem}{Theorem}[section]
\newtheorem{corollary}[theorem]{Corollary}
\newtheorem{lemma}[theorem]{Lemma}
\newtheorem{proposition}[theorem]{Proposition}
\newtheorem{observation}[theorem]{Observation}
\newtheorem{remark}[theorem]{Remark}
\newtheorem{assumption}[theorem]{Assumption}
\newtheorem{fact}[theorem]{Fact}
\theoremstyle{definition}
\newtheorem{definition}[theorem]{Definition}
\newtheorem{notation}[theorem]{Notation}
\newtheorem{example}[theorem]{Example}
\DeclarePairedDelimiter\parentheses{(}{)}
\DeclarePairedDelimiter\braces{\{}{\}}
\DeclarePairedDelimiter\brackets{[}{]}
\DeclarePairedDelimiter\absolute{|}{|}
\DeclarePairedDelimiter\parhalf{(}{]}
\begin{document}
\newcommand{\argmax}{{\mathrm{argmax}}}
\newcommand{\supp}{{\mathrm{supp}}}
\newcommand{\poly}{{\mathrm{poly}}}
\newcommand{\f}{{f}}
\newcommand{\g}{{g}}
\newcommand{\p}{{p}}
\newcommand{\HH}{{H}}
\newcommand{\LL}{{L}}

\title{Bayesian Persuasion under Ex Ante and Ex Post Constraints\thanks{This research has been supported by The Israel Science Foundation (grant \#336/18). The first author's research has been partially supported by The U.S.--Israel Binational Science Foundation (grant \#BSF 2026924) and by The German--Israeli Foundation for Scientific Research and Development (grant \#GIF 2027111); the second author is a Taub Fellow (supported by The Taub Family Foundation). The authors thank Ruggiero Cavallo for helpful conversations that motivated this research and anonymous reviewers for their helpful suggestions on improving this~paper.}}
\date{April 11, 2021}

\author{Yakov Babichenko\thanks{Technion--Israel Institute of Technology | \emph{E-mail}: \href{mailto:yakovbab@technion.ac.il}{yakovbab@technion.ac.il}.} \and Inbal Talgam-Cohen\thanks{Technion--Israel Institute of Technology | \emph{E-mail}: \href{mailto:italgam@cs.technion.ac.il}{italgam@cs.technion.ac.il}.} \and Konstantin Zabarnyi\thanks{Technion--Israel Institute of Technology | \emph{E-mail}: \href{mailto:konstzab@gmail.com}{konstzab@gmail.com}.}}

\maketitle

\begin{abstract}
Bayesian persuasion is the study of information sharing policies among strategic agents. A prime example is signaling in online ad auctions: what information should a platform signal to an advertiser regarding a user when selling the opportunity to advertise to her? Practical considerations such as preventing discrimination, protecting privacy or acknowledging limited attention of the information receiver impose constraints on information sharing. In this work, we propose and analyze a simple way to mathematically model such constraints as restrictions on Receiver's admissible posterior beliefs.

We consider two families of constraints -- ex ante and ex post, where the latter limits each instance of Sender-Receiver communication, while the former more general family can also pose restrictions in expectation. For the ex ante family, Doval and Skreta establish the existence of an optimal signaling scheme with a small number of signals -- at most the number of constraints plus the number of states of nature; we show this result is tight and provide an alternative proof for it. For the ex post family, we tighten a bound of Vølund, showing that the required number of signals is at most the number of states of nature, as in the original Kamenica-Gentzkow setting. As our main algorithmic result, we provide an additive bi-criteria FPTAS for an optimal constrained signaling scheme assuming a constant number of states; we improve the approximation to single-criteria under a Slater-like regularity condition. The FPTAS holds under standard assumptions; relaxed assumptions yield a PTAS. Finally, we bound the ratio between Sender's optimal utility under convex ex ante constraints and the corresponding ex post constraints. This bound applies to finding an approximately welfare-maximizing constrained signaling scheme in ad auctions.
\end{abstract}

\section{Introduction}
\label{sec:intro}
In many real-life situations, one entity relies on information revealed by another entity to decide which action to take. Call the former and the latter entities \emph{Receiver} and \emph{Sender}, respectively. Sender has the power to commit to a revelation policy, a.k.a.~a \emph{signaling scheme}. Sender would like to strategically design such a scheme to \emph{persuade} Receiver to act in Sender's interest. Mathematically, a signaling scheme transforms Receiver's prior belief about how some unknown \emph{state of nature} is distributed into a posterior belief, which determines Receiver's action.
 
Since strategic communication of information is intrinsic to most human endeavours, persuasion is of high importance in practice, and is becoming even more so in today's digital economy. Indeed, persuasion has been estimated to account for at least 30\% of the total US economy~\cite{MK,Antioch}. Persuasion has also attracted significant research interest in recent years, initiated by the celebrated Bayesian persuasion model of~\citeauthor{KG}.

\subsection{Our Contribution}
\label{sub:contribution}
We study a theoretical model for constrained Bayesian persuasion under general families of \emph{ex ante} and \emph{ex post} constraints. Ex ante constraints are statistical limitations on the amount of information Receiver may learn when the Sender-Receiver communication is repeated over time; ex post constraints are a strong particular case restricting the information passage on \emph{every} instance of the communication. These constraint families have various significant applications. In particular,~\citet{Tsakas} model signaling via noisy channels by ex ante-constrained persuasion.~\citet{DS} further show that optimal signaling via a capacity-constrained channel is equivalent to a constrained persuasion setting with a single entropy ex ante constraint.~\citet{Volund}, based on research in cognitive science, suggests ex post constraints as a possible model for human behaviour upon receiving an unwelcome signal. One of the main motivating examples in this work is online ad auctions in which ex ante constraints reduce discrimination and ex post constraints protect user privacy.

{\bf Our results and paper organization.} Let $m$ and $k$ be the numbers of constraints and states of nature, respectively. Section~\ref{sec:model} formally defines our model and describes the main motivations. Section~\ref{sec:existence} shows a tight bound of $k$ on the support size of an optimal ex post-constrained signaling scheme, which is the same as in the original setting of~\citeauthor{KG}. For ex ante constraints, Section~\ref{sec:existence} proves tightness of the $k+m$ bound of~\citet{DS} on the support size and provides an alternative proof to this bound; in particular, it extends the lower bound result of~\citet{TT} beyond a single constraint. The support size of a signaling scheme is a common measure of its complexity, similar to menu-size complexity in auctions~\cite{HN,HN17}. Section \ref{sec:computational} provides an additive bi-criteria FPTAS for an optimal signaling scheme when $k$ is constant and improves it to single-criteria under a Slater-like regularity condition. This result holds for standard constraints -- including Kullback–Leibler (KL) divergence, entropy and norm constraints (such as variation distance) -- and standard objective functions: Lipschitz-continuous (corresponding to Receiver having a continuum of actions) or piecewise constant (for finite Receiver's action space). Although these objective and constraint families capture a wide range of scenarios, the same algorithm remains an additive bi-criteria PTAS -- which improves to single-criteria under a Slater-like condition -- for even more general families. Section~\ref{sec:vs} shows that for constant $m$, convex constraints and a wide family of objective functions, ex ante constraints outperform ex post constraints by a constant multiplicative factor. Subsection~\ref{sub:app} concludes by applications to ad auctions with exponentially large states of nature space, using a generalization of the setting of~\citeauthor{BBX}~\cite{BBX}.

{\bf Technical challenges.} Ex ante constraints raise technical challenges not usually encountered in the literature on persuasion.  
In our model, we cannot restrict attention to \emph{straightforward policies}~\cite{KG} in which Sender recommends an action to Receiver in an incentive-compatible way. These policies are a very central tool in persuasion problems and are widely applied across the literature~\cite[see, e.g.,][]{Dughmi}, but they are not descriptive enough for determining whether a given ex ante constraint is satisfied. 
In particular, an optimal signaling scheme in our model cannot be described by a finite linear program (LP). Note that we do not assume Receiver's action space is finite, but even such a simplifying assumption would not have resolved these issues.

\subsection{Related Work}
\label{sub:related}
The seminal work of~\citet{KG} introduces Bayesian persuasion and characterizes Sender's optimal signaling scheme using the \emph{concavification} approach. Among the works on algorithmic aspects of persuasion we mention a negative result of~\citet{DX}, which is relevant to hardness of approximating the Sender's optimal utility; see~\cite{Dughmi} for a comprehensive survey of computational results.

In the context of auctions, an early work on signaling information is the classic paper of~\citet{MW}. \citet{EFGPT,MS} apply a computational approach to signaling in auctions;~\citet{FJMNTV} study signaling in the revenue-maximizing Myerson auction~\cite{Myerson};~\citet{BBX} study it in the welfare-maximizing second-price auction with exponentially many states of nature; and~\citet{DPT} design the signaling and auction mechanisms simultaneously.

The most closely related works to our own are the following: (a)~Our algorithmic approach in Section~\ref{sec:computational} is related to that of
\citet{CCDEHT}, as both use discretization and linear programming to achieve an additive FPTAS. (b)~\citet{DIR,DIOT} study constrained persuasion, but their constraints are on the complexity of the Sender-Receiver communication as measured by message length or number of signaled features and so are fundamentally different from ours.%
\footnote{They also study a version called ``bipartite signaling'', which has a combinatorial flavour different than ours, in an auction setting with the strong assumption that bidder values are known.} \citet{Ichihashi} considers persuasion by Sender who is constrained in the information she can acquire (and therefore, send) and characterizes the set of possible equilibrium outcomes. Our Theorem~\ref{T1} is related to this literature in that it indicates that ex post constraints on persuasion do not cause a blowup in the number of signals needed to persuade optimally. (c)~Inspired by~\cite{TT},~\citet{DS} prove an upper bound on the required number of signals in an ex ante-constrained optimal scheme; we show that this bound is tight, give an alternative proof to the bound and provide an analogous tight bound for ex post constraints in Section~\ref{sec:existence}. (d)~\citet{Volund} studies a model of persuasion on compact subsets, which is equivalent to our ex post constraints; there is no parallel in that work to ex ante constraints, and the results on ex post in the two works do not overlap.

In Subsection~\ref{sub:motivation}, we discuss motivating applications of ex ante and ex post constraints, including limited attention, as well as privacy protection in online ad auctions. \cite{BS,LMW} study persuasion with limited attention -- see Subsection~\ref{sub:motivation} for details.~\citet{EEM} study ex ante and ex post privacy constraints in the design of auctions rather than persuasion schemes. \citet{Ichihashi20} studies the economic implications of online consumer privacy; in his model, the consumer, rather than the seller, plays the role of Sender. It is important to note that the differential privacy paradigm~\cite[see][]{DR} does \emph{not} apply to privacy protection in online ad auctions: the state of nature about which information is revealed represents characteristics of an individual rather than statistics of a large population, and it is inherent to ad personalization that these characteristics influence the outcome in a non-negligible way.

\section{Our Model}
\label{sec:model}
\subsection{Bayesian Persuasion Preliminaries}
\label{sub:preliminaries}
We consider Bayesian persuasion with a single \emph{Sender} and a single \emph{Receiver}, as introduced by~\citet{KG}. Fix a space of $k$ \emph{states of nature} $\Omega$ and a commonly-known \emph{prior distribution} $\p$ on them. Take some compact nonempty set $A$ to be Receiver's \emph{action space}. Introduce two random variables $\omega$ and $x$, representing the \emph{state of nature} and Receiver's \emph{action}, respectively. Fix a \emph{Sender's utility function} $\tilde{u}_s:A\times\Omega\to \mathbb{R}_{\geq 0}$ and a \emph{Receiver's utility function} $u_r:A\times \Omega \to \mathbb{R}_{\geq 0}$. The Sender-Receiver communication is specified by a \emph{signaling scheme} $\Sigma$, a.k.a.~a \emph{signaling policy}, which is a randomized function from $\Omega$ to some set of signals (this notion will be formalized soon). Sender must commit to $\Sigma$ before learning $\omega$.

Denote by $\Delta\parentheses*{\Omega}$ the set of probability distributions over $\Omega$. Consider it to be a subset of $\brackets*{0,1}^k$, with $i$-th coordinate being the probability assigned to the $i$-th element of $\Omega$.

Let $\sigma$ be the actual \emph{signal realization}. Note that $\sigma$ induces an updated distribution on $\Omega$ in Receiver's view, called the \emph{posterior distribution} or the \emph{posterior}. Let $\p_{\sigma}\in\Delta\parentheses*{\Omega}$ be the posterior induced by $\sigma$. The \emph{support} of $\Sigma$, $\supp\parentheses*{\Sigma}$, is the intersection of all the closed sets $S\subseteq\Delta\parentheses*{\Omega}$ s.t.~$\Pr_{\Sigma}\brackets*{\p_{\sigma}\in S}=1$. If $\Sigma$ uses only countably many signals, then $\supp\parentheses*{\Sigma}$ is the set of all the posteriors induced by signal realizations of $\Sigma$ with a positive probability.

Formally, $\Sigma$ is a distribution, unconditional on the state of nature, over the elements of $\Delta\parentheses*{\Omega}$ that belong to $\supp\parentheses*{\Sigma}$. For any $\omega_0\in\Omega$, assuming $\omega=\omega_0$, $\Sigma$ induces a conditional distribution over the elements of $\supp\parentheses*{\Sigma}$ that specifies how Sender chooses the signal realization when $\omega=\omega_0$. Denote this distribution by $\Sigma\parentheses*{\omega_0}$. Note that given $p$ and $\Sigma$, it can be computed by Bayes' law.

For simplicity, we introduce the following notation for the expectation of a function of the posterior over the elements of $\supp\parentheses*{\Sigma}$ according to $\Sigma$:

\begin{notation}
\label{N1}
For a function $\f:\Delta\parentheses*{\Omega}\to\mathbb{R}$:
\begin{equation*}
E\brackets*{\Sigma,\f}:=\mathbb{E}_{\p_{\sigma}\sim \Sigma}\brackets*{\f\parentheses*{\p_{\sigma}}}=\mathbb{E}_{\omega\sim\p, \p_{\sigma}\sim \Sigma\parentheses*{\omega}}\brackets*{\f\parentheses*{\p_{\sigma}}}.
\end{equation*}
\end{notation}

By~\cite{Blackwell,AMS}, a distribution $\Sigma$ represents a signaling scheme if and only if $\Sigma$ is \emph{Bayes-plausible}; that~is:

\begin{equation*}
\label{EQ1}
\forall \omega_0 \in\Omega:\;\; \p\brackets*{\omega_0}=E\brackets*{\Sigma,\p_{\sigma}\brackets*{\omega_0}}.
\end{equation*}

The persuasion process runs as follows:
(1)~Sender commits to a signaling policy $\Sigma$.
(2)~Sender discovers the state of nature $\omega$.
(3)~Sender transmits a signal realization $\sigma$ to Receiver, according to $\Sigma\parentheses*{\omega}$.
(4)~Receiver chooses an action $x\in A$ s.t.~$x\in\argmax\parentheses*{\mathbb{E}_{\omega'\sim \p_{\sigma}}\brackets*{u_r \parentheses*{x,\omega'}}}$; assume, as is standard, that ties are broken in Sender's favour.
(5)~Sender gets utility of $\tilde{u}_s \parentheses*{x,\omega}$, while Receiver gets utility of $u_r \parentheses*{x,\omega}$.

Since $x$ depends only on $\p_{\sigma}$, there exists $\bar{u}_s:\Delta\parentheses*{\Omega}\times\Omega\to \mathbb{R}_{\geq 0}$ s.t.~$\tilde{u}_s\parentheses*{x,\omega}\equiv \bar{u}_s\parentheses*{\p_{\sigma},\omega}$. Define $u_s:\Delta\parentheses*{\Omega}\to \mathbb{R}_{\geq 0}$ by $u_s\parentheses*{p_{\sigma}}:=\mathbb{E}_{\omega'\sim\p_{\sigma}}\brackets*{\bar{u}_s\parentheses*{\p_{\sigma}, \omega'}}$.

\begin{remark}
\label{R1}
From now on we shall consider $u_s$ instead of $\tilde{u}_s$ or $\bar{u}_s$, assuming, therefore, that Sender's utility is state of nature-independent. This is w.l.o.g.~for our theorems from Sections~\ref{sec:existence}-\ref{sec:computational}, since the passage from $\bar{u}_s$ to $u_s$ preserves the conditions required there (being upper semi-continuous, continuous, piecewise constant or $O(1)$-Lipschitz).\footnote{In the state-dependent setting, $\bar{u}_s\parentheses*{\cdot,\omega_0}$ has to satisfy the theorem requirements from $u_s$ for every $\omega_0\in\Omega$.} 
While one cannot apply the results of Section~\ref{sec:vs} to the state-dependent case without strengthening Assumption~\ref{A1}, the natural applications to ad auctions discussed there have state-independent Sender's utility.
\end{remark}

Throughout we make the following assumption, which is a relaxation of the standard assumption in the persuasion literature that $u_s$ is continuous. In particular, this assumption encompasses $u_s$ that is a threshold function.

\begin{assumption}
\label{A8}
The function $u_s$ is upper semi-continuous.
\end{assumption}

\subsection{Ex Ante and Ex Post Constraints}
\label{sub:constraints}
So far we have described the setting of~\citet{KG}. However, in our model we do not allow Sender to choose among all Bayes-plausible signaling schemes, but only among schemes that satisfy certain restrictions (see Subsection~\ref{sub:motivation} for motivation). We define two general families of constraints: \emph{ex ante} and \emph{ex post}. A constraint of the latter type restricts the admissible values of a certain function of $p_{\sigma}$ for every possible $p_{\sigma}$, while a constraint of the former type restricts only the expectation of such a function.

\begin{definition}[Ex ante constraints]
\label{D1}
An \emph{ex ante constraint} on a signaling scheme $\Sigma$ is a constraint of the form:
\begin{equation*}
E\brackets*{\Sigma,\f}\leq c
\end{equation*}
for continuous $f:\Delta\parentheses*{\Omega}\to\mathbb{R}$ and a constant $c\in\mathbb{R}$.
\end{definition}

\begin{definition}[Ex post constraints]
\label{D5}
An \emph{ex post constraint} on a signaling scheme $\Sigma$ is a constraint of the form:
\begin{equation*}
\forall\p_{\sigma}\in\supp\parentheses*{\Sigma}:\;\\f\parentheses*{\p_{\sigma}}\leq c
\end{equation*}
for continuous $f:\Delta\parentheses*{\Omega}\to\mathbb{R}$ and a constant $c\in\mathbb{R}$.
\end{definition}

For a constraint defined as in either of the previous two definitions, we say that the constraint is \emph{specified} by the function $\f$ and the constant $c$. A constraint specified by a convex $f$ and some constant $c$ is called \emph{convex}.

\begin{observation}
\label{O1}
Ex post constraints are a special case of ex ante constraints.
\end{observation}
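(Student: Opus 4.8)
The plan is to exhibit, for every ex post constraint, an ex ante constraint with exactly the same set of feasible signaling schemes. Given an ex post constraint specified by a continuous $f:\Delta\parentheses*{\Omega}\to\mathbb{R}$ and a constant $c\in\mathbb{R}$, I would introduce the auxiliary function $g:\Delta\parentheses*{\Omega}\to\mathbb{R}$ defined by $g\parentheses*{q}:=\max\braces*{f\parentheses*{q}-c,\,0}$. Since $f$ is continuous and $t\mapsto\max\braces*{t,0}$ is continuous, $g$ is continuous and nonnegative, so ``$E\brackets*{\Sigma,g}\le 0$'' is a well-formed ex ante constraint in the sense of Definition~\ref{D1} (with constant $0$). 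The statement to establish is then the equivalence: a signaling scheme $\Sigma$ satisfies the ex post constraint specified by $(f,c)$ if and only if $\Sigma$ satisfies the ex ante constraint specified by $(g,0)$.

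For the forward implication, assume $f\parentheses*{\p_\sigma}\le c$ for every $\p_\sigma\in\supp\parentheses*{\Sigma}$, so that $g$ vanishes on $\supp\parentheses*{\Sigma}$. Because $\Delta\parentheses*{\Omega}\subseteq\brackets*{0,1}^k$ is a separable metric space, the intersection of closed probability-$1$ sets defining $\supp\parentheses*{\Sigma}$ can be taken countable, whence $\Pr_\Sigma\brackets*{\p_\sigma\in\supp\parentheses*{\Sigma}}=1$ and therefore $E\brackets*{\Sigma,g}=0\le 0$. For the reverse implication, assume $E\brackets*{\Sigma,g}\le 0$; since $g\ge 0$ pointwise this forces $E\brackets*{\Sigma,g}=0$, hence $g=0$ $\Sigma$-almost surely. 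The set $g^{-1}\parentheses*{\braces*{0}}$ is closed, being the preimage of a closed set under the continuous map $g$, and it has $\Sigma$-probability $1$; by the definition of $\supp\parentheses*{\Sigma}$ as the intersection of all closed probability-$1$ sets, it follows that $\supp\parentheses*{\Sigma}\subseteq g^{-1}\parentheses*{\braces*{0}}$. Thus $g$, and hence $f-c$, is nonpositive on all of $\supp\parentheses*{\Sigma}$, which is precisely the ex post constraint.

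This argument is essentially routine; the only delicate point is the measure-theoretic bookkeeping around $\supp\parentheses*{\Sigma}$ --- namely, that it carries full $\Sigma$-mass (used for the forward implication) and that it is contained in every closed full-mass set (used for the reverse implication, where continuity of $f$ is what guarantees that $g^{-1}\parentheses*{\braces*{0}}$ is closed). Both facts are immediate from the separability of $\Delta\parentheses*{\Omega}$ and the definition of support recalled in Section~\ref{sec:model}, so I do not expect a genuine obstacle. If one prefers to avoid even this, the entire argument can be carried out at the level of ``$\Sigma$-almost every $\p_\sigma$'' and then upgraded to ``every $\p_\sigma\in\supp\parentheses*{\Sigma}$'' by a single appeal to continuity of $f$; I would choose whichever formulation best fits the conventions already used in the paper.
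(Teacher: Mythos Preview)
Your proposal is correct and takes essentially the same approach as the paper: the paper simply notes that the ex post constraint specified by $f$ and $c$ is equivalent to the ex ante constraint specified by $\max\braces*{f,c}$ and $c$, which is a constant shift of your $(g,0)$. You supply more measure-theoretic detail around $\supp\parentheses*{\Sigma}$ than the paper does, but the idea is identical.
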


Indeed, an ex post constraint specified by some $\f$ and $c$ is equivalent to the ex ante constraint specified by $\max\braces*{\f,c}$ and $c$. Note that if $f$ is convex then so is $\max\braces*{\f,c}$.

Every ex ante constraint can be transformed into a (stronger) ex post constraint by "erasing the expectation" and vice versa. Formally:

\begin{definition}
\label{D2}
An ex post and an ex ante constraint \emph{correspond} to each other if they are specified by the same function and the same constant.
\end{definition}

\begin{definition}
\label{D3}
Given a set of constraints, a signaling scheme satisfying all of them is called \emph{valid}.
\end{definition}

\begin{definition}
\label{D4}
A set of constraints is called \emph{trivial} if every signaling scheme satisfies it.
\end{definition}

\subsection{Motivation for Constrained Persuasion}
\label{sub:motivation}
In many applications of~\citeauthor{KG}'s model, Sender may not be able to reveal as much information as would theoretically be optimal due to imposed constraints. Such constraints can originate from sources including law, professional integrity, political agreements, public opinion and limited attention.

{\bf Online ad auctions.} In this first motivating example, the auctioneer -- an advertising platform -- is Sender, while the set of bidders -- which are advertisers -- is Receiver.% 
\footnote{We treat the bidders as a single Receiver since they all get the same signal; private signaling poses additional challenges~\cite{AB} and is left for future work.}
The profile of the web user who is about to view the ad is the state of nature. This profile is known to the auctioneer, but not to the bidders; every signal reveals information about it. Such information revelation should be restricted by both privacy and fairness considerations. 

The constraint families we introduce are suitable for protecting privacy: following~\citet{EEM}, privacy protection can be modeled as imposing a threshold on the KL divergence from the prior to the posterior. The KL divergence quantifies how much more informative the posterior is compared to the prior due to extra information about the user provided by the signal realization. On the one hand, an ex post constraint on the KL divergence provides a relatively robust protection of individual privacy by ruling out sending a very informative signal even with only a small probability. On the other hand, the corresponding ex ante constraint protects privacy on the group level -- e.g., it limits Receiver's ability to learn the shopping habits of certain population groups, since the posterior is close, on average, to the prior.

Another important restriction on signaling in ad auctions is fairness, or anti-discrimination -- e.g., ensuring that enough women compared to men are shown an ad for a high-paying job~\cite{CKSV,CMV}. Consider, for simplicity, a uniform prior over population groups. A simple constraint specified by $(-\min_{\omega'\in\Omega}\{ p_{\sigma}\brackets*{\omega'}\})$ lower bounds the frequency of a population group in the posterior, ensuring, therefore, its proportional inclusion.% 
\footnote{
If the prior over population groups is not uniform, then we can easily add weights to this constraint: $-\min_{\omega'\in\Omega}\braces*{b_{\omega'} p_{\sigma}\brackets*{\omega'}}$.
}
An ex ante constraint of this form ensures that on average, the advertiser does not get enough information to discriminate against particular groups.

{\bf Limited attention.} A second motivating example involves constraints arising from Receiver's limited attention span. As Simon (\citeyear{Simon}) noted, ``a wealth of information creates a poverty of attention''. Our model enables limiting the signaled information so that it ``fits'' within Receiver's limited attention.%
\footnote{An alternative model of~\cite{BS,LMW} allows Sender to ``flood'' Receiver with information, but Receiver strategically chooses what to pay attention to. 
Constrained persuasion might be viewed as a restriction that simply avoids flooding Receiver with information in expectation (the ex ante model) or always (the ex post model).
}
Following the \emph{rational inattention} literature~\cite{Sims}, define the attention required from Receiver to process Sender's signal $\sigma$ as the entropy of the posterior $p_\sigma$.% 
\footnote{\citet{BS} use mutual information of $p_\sigma$ and Receiver's perception of it after paying limited attention as the measure of the attention invested by Receiver. In our model, Receiver always pays full attention; thus, the mutual information coincides with the entropy of $p_\sigma$.}
By constraining the entropy -- either in expectation (i.e., ex ante) or of every posterior (i.e., ex post) -- we enable Receiver to process the signal despite her limited attention (where the limit is either in expectation or per signal, respectively). A concrete application from~\citet{BS} includes a busy executive as Receiver, one of her advisors as Sender and constraints on the signaled information enforced by keeping meetings and briefings short (on average or per meeting). 

\section{Existence Results}
\label{sec:existence}
\citet{DS}~prove that for every set of $m$ ex ante constraints, there exists an optimal valid signaling scheme with support size of at most $k+m$:
    \begin{fact}[\citet{DS} -- existence of an optimal valid signaling scheme under ex ante constraints with a linear-sized support]
    \label{F1}
    Fix $m$ ex ante constraints. Then either there exists an optimal valid signaling scheme with support size at most $k+m$ or the set of valid signaling schemes is empty.
    \end{fact}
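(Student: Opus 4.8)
The plan is to recast Sender's problem as a (possibly infinite) linear program over distributions of posteriors, realize it geometrically as maximizing one coordinate over a compact \emph{moment body}, and then extract a small‑support optimum by a Carath\'{e}odory‑type argument; the only delicate point is that $u_s$ is merely upper semi‑continuous. By Remark~\ref{R1} we may take $u_s\colon\Delta(\Omega)\to\mathbb{R}_{\geq 0}$ to depend on the posterior alone, and by~\cite{Blackwell,AMS} a signaling scheme is precisely a probability measure $\mu$ on $\Delta(\Omega)$ with barycenter $\p$; such a $\mu$ is valid iff $E[\mu,f_j]\leq c_j$ for the continuous functions $f_1,\dots,f_m$ specifying the $m$ constraints, and its objective value is $E[\mu,u_s]$. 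Let $v\colon\Delta(\Omega)\to\mathbb{R}^{k+m+1}$ be $v(q)=\bigl(q_1,\dots,q_k,\,f_1(q),\dots,f_m(q),\,u_s(q)\bigr)$, and set $\mathcal{M}:=\mathrm{conv}\bigl(\,\overline{v(\Delta(\Omega))}\,\bigr)$. Since $\Delta(\Omega)$ is compact, the $f_j$ are continuous, and $u_s$ is bounded (nonnegative and upper semi‑continuous on a compact set), $\overline{v(\Delta(\Omega))}$ is compact, so $\mathcal{M}$ is a compact convex body; moreover $\mathcal{M}$ lies in the hyperplane $\mathcal{H}=\{y:\sum_{i=1}^{k}y_i=1\}$, so $\dim\mathcal{M}\leq k+m$. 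For any valid $\mu$ the point $\int v\,d\mu$ lies in $\mathcal{M}$, has $y_i=\p_i$ for $i\leq k$ and $y_{k+j}\leq c_j$ for $j\leq m$, and has last coordinate equal to the objective; conversely, any $y\in\mathcal{M}$ with those properties, written as a convex combination of points of $\overline{v(\Delta(\Omega))}$ and read off coordinatewise, yields a valid scheme (the domination step below). Hence, if no point of $\mathcal{M}$ satisfies $y_i=\p_i$ for all $i\leq k$ and $y_{k+j}\leq c_j$ for all $j\leq m$, then the set of valid schemes is empty, which is the second alternative; assume henceforth it is nonempty.

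Next I would locate the optimum on the boundary of $\mathcal{M}$. The set $F\subseteq\mathcal{M}$ of points with $y_i=\p_i$ for $i\leq k$ and $y_{k+j}\leq c_j$ for $j\leq m$ is nonempty, closed and bounded, so the last coordinate attains a maximum $V^*$ on $F$, at some point $y^*$. I claim $y^*$ lies on the relative boundary of $\mathcal{M}$ (or $\dim\mathcal{M}<k+m$): if $y^*$ were in the relative interior of $\mathcal{M}$, then perturbing it slightly in the direction that increases only its last coordinate would keep it in $\mathcal{H}$ (this move leaves $\sum_{i\leq k}y_i$ fixed), in $\mathcal{M}$, and in $F$ (the first $k$ coordinates and the bounds on coordinates $k+1,\dots,k+m$ are untouched), contradicting the maximality of $y^*$. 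Therefore $y^*$ lies in a face $G$ of $\mathcal{M}$ with $\dim G\leq k+m-1$ (a proper face when $\dim\mathcal{M}=k+m$, and all of $\mathcal{M}$ otherwise).

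Now I would apply Carath\'{e}odory and fix the semi‑continuity. Applying Carath\'{e}odory's theorem inside $G$, $y^*=\sum_{i=1}^{k+m}\lambda_i w_i$ is a convex combination of at most $k+m$ extreme points of $\mathcal{M}$, and every extreme point of $\mathcal{M}$ belongs to $\overline{v(\Delta(\Omega))}$. By continuity of the $f_j$ and upper semi‑continuity of $u_s$, every element of $\overline{v(\Delta(\Omega))}$ has the form $\bigl(q_1,\dots,q_k,f_1(q),\dots,f_m(q),s\bigr)$ with $q\in\Delta(\Omega)$ and $0\leq s\leq u_s(q)$. Writing each $w_i$ in this form with posterior $q_i$ and last coordinate $s_i$, replace $s_i$ by $u_s(q_i)\geq s_i$; this changes none of the first $k+m$ coordinates of $\sum_i\lambda_i w_i$. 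Consequently the scheme $\mu:=\sum_{i=1}^{k+m}\lambda_i\delta_{q_i}$ has barycenter $\p$ (coordinates $1,\dots,k$), satisfies $E[\mu,f_j]\leq c_j$ (coordinates $k+1,\dots,k+m$, unchanged), is therefore valid, and has objective $\sum_i\lambda_i u_s(q_i)\geq V^*$. Since $V^*$ upper‑bounds the objective of every valid scheme, $\mu$ is optimal and is supported on at most $k+m$ posteriors, which is the claim.

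I expect the main obstacle to be bookkeeping rather than any single hard step: the subtlety is the interaction between the mere upper semi‑continuity of $u_s$ and the support bound. Keeping the moment body $\mathcal{M}$ compact forces one to pass to the closure $\overline{v(\Delta(\Omega))}$, which introduces ``phantom'' points whose last coordinate understates $u_s$; the replacement $s_i\mapsto u_s(q_i)$ is exactly what repairs optimality, and as a bonus it shows the supremum is attained, so no separate existence argument is needed. The one genuinely geometric input is that the maximizer sits on the relative boundary of $\mathcal{M}$: one dimension is shaved by the simplex identity $\sum_i q_i=1$ (which places $\mathcal{M}$ in $\mathcal{H}$) and one by this boundary observation, and together they cut Carath\'{e}odory's generic count of $k+m+1$ points down to the stated $k+m$, with no recourse to duality for the infinite‑dimensional program.
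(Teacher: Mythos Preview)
Your proof is correct and takes a genuinely different route from the paper's alternative proof in Appendix~\ref{app:existence_alternative}. The paper works in the infinite-dimensional space of probability measures on $\Delta(\Omega)$: it equips this space with the L\'evy--Prokhorov metric, verifies upper semi-continuity of the objective there, invokes Bauer's maximum principle to locate an optimum at an extreme point of the feasible set, and then bounds the support of extreme points in two stages---first to $2^{k+m}$ via a Richter--Karr hyperplane-doubling argument, and then down to $k+m$ by recognizing each extreme point as a basic feasible solution of a finite LP with $k+m$ constraints. Your argument instead stays entirely in finite dimensions: you push everything forward to the moment body $\mathcal{M}\subseteq\mathbb{R}^{k+m+1}$, shave one dimension via the simplex identity and a second via the observation that the optimizer lies on the relative boundary, and then apply Minkowski--Carath\'eodory once to get $k+m$ directly; the upper semi-continuity of $u_s$ is handled by the neat ``replace $s_i$ by $u_s(q_i)$'' step rather than by topologizing an infinite-dimensional space. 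Your approach is more self-contained and yields the sharp bound in one shot; the paper's approach, while heavier, is more modular---the same template, with the feasible measures restricted to a compact $K\subseteq\Delta(\Omega)$, gives the ex post bound of Theorem~\ref{T1} essentially for free.
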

We show that this bound is tight. We further prove that for any number $m$ of ex post constraints, a stronger tight bound of $k$ holds, just as in the unconstrained setting of~\citeauthor{KG}; the same proof outline yields an alternative proof to the result of~\cite{DS} on ex ante constraints, as shown in Appendix~\ref{app:existence_alternative}.
\begin{proposition}
    \label{P3}
    The bound from Fact~\ref{F1} on the support size is tight for every $k$ and $m$.
    \end{proposition}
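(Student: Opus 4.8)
The plan is to exhibit, for each $k$ (we may assume $k\ge 2$) and each $m$, a single persuasion instance with $k$ states and $m$ ex ante constraints whose \emph{unique} optimal valid signaling scheme has support of size exactly $k+m$; together with Fact~\ref{F1} this proves tightness. By Remark~\ref{R1} I specify Sender's utility directly as a function of the posterior, $u_s(q):=\sum_{i=1}^k q_i^2$ on $\Delta(\Omega)$ (realizable, e.g., by having Receiver report her posterior under a strictly proper scoring rule and setting $\tilde u_s(x,\omega):=x_\omega$); this $u_s$ is continuous, hence satisfies Assumption~\ref{A8}, and over $\Delta(\Omega)$ it attains its maximum $1$ precisely at the $k$ vertices $e_1,\dots,e_k$ (the point masses on individual states). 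Fix an interior prior $p_0$. I then pick $m$ distinct interior points $y_1,\dots,y_m\in\Delta(\Omega)$ and weights $\beta_1,\dots,\beta_m>0$ small enough that $\sum_l\beta_l<1$ and $w_i:=p_{0,i}-\sum_l\beta_l y_{l,i}>0$ for all $i$; put $Z:=\{e_1,\dots,e_k,y_1,\dots,y_m\}$ and $\nu_l:=1-\sum_i y_{l,i}^2>0$.

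The heart of the argument is the choice of constraint functions. For $m=1$ one can be fully explicit: take $f(q):=-\min_i q_i$ with $c:=-\beta$ for small $\beta\in(0,\min_i p_{0,i})$; then $u_s(q)-(k-1)f(q)=\sum_i q_i^2+(k-1)\min_i q_i\le 1$ on $\Delta(\Omega)$ with equality precisely at the $k$ vertices and at $p_0$, which (via the duality step below) forces a $(k+1)$-atom optimum. For general $m$ I build continuous $f_1,\dots,f_m:\Delta(\Omega)\to\mathbb{R}$ with (a) $f_l(e_i)=0$ for all $i,l$ and $f_l(y_{l'})=-1$ if $l'=l$ and $0$ otherwise, and (b) $\sum_l\nu_l f_l=g$, where $g(q):=\sum_i q_i^2-1+\min_{z\in Z}\|q-z\|^2$. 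To obtain (b) consistently with (a), take continuous bumps $\chi_l$ with pairwise disjoint supports, $\chi_l(y_l)=1$ and $\chi_l\equiv 0$ on $Z\setminus\{y_l\}$, and set $\nu_l f_l:=\chi_l\,g+\tfrac1m\bigl(1-\sum_{l'}\chi_{l'}\bigr)g$; summing over $l$ gives (b), and (a) follows since $g(e_i)=0$ and $g(y_l)=\sum_i y_{l,i}^2-1=-\nu_l$. The $l$-th constraint is $E[\Sigma,f_l]\le c_l$ with $c_l:=-\beta_l$; these are nontrivial, as full revelation violates each of them.

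Optimality and uniqueness then follow from a weak-duality argument with multipliers $\nu_l$. Let $h:=u_s-\sum_l\nu_l f_l=1-\min_{z\in Z}\|\cdot-z\|^2$, so $h\le 1$ on $\Delta(\Omega)$ with equality exactly on $Z$. For any valid scheme $\mu$, since each $\nu_l>0$ and $E[\mu,f_l]\le c_l$ we get $E[\mu,u_s]=E[\mu,h]+\sum_l\nu_l E[\mu,f_l]\le\max_q h(q)+\sum_l\nu_l c_l=1-\sum_l\nu_l\beta_l$. The scheme $\Sigma^*$ supported on $Z$ that puts mass $\beta_l$ on $y_l$ and mass $w_i$ on $e_i$ is Bayes-plausible with barycenter $p_0$, satisfies every constraint with equality, and attains $E[\Sigma^*,u_s]=\sum_i w_i+\sum_l\beta_l\sum_i y_{l,i}^2=1-\sum_l\nu_l\beta_l$, hence is optimal. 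Moreover, any optimal valid $\mu$ must achieve equality throughout the chain, which forces $\supp(\mu)\subseteq\{h=1\}=Z$ and $E[\mu,f_l]=c_l$ for every $l$; substituting the $f_l$-values on $Z$ into these $m$ equations pins the $\mu$-mass of each $y_l$ at $\beta_l$, and Bayes-plausibility then pins the $\mu$-mass of each $e_i$ at $w_i$. Thus $\Sigma^*$ is the unique optimal valid scheme, and $|\supp(\Sigma^*)|=k+m$ since all $w_i$ and all $\beta_l$ are positive. For $m=0$ the same argument with $Z=\{e_1,\dots,e_k\}$ gives full revelation as the unique optimum, with support size $k$.

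The step I expect to be most delicate is engineering the $f_l$ so that all three requirements hold at once: they must be continuous (Definition~\ref{D1}); their $\nu_l$-weighted sum must reshape $u_s$ into an $h$ whose set of maximizers is \emph{exactly} $Z$, with no spurious maximizers, so that the support of every optimum is forced into $Z$; and their values on $Z$ must decouple the $m$ binding constraints so that each one independently fixes the mass of one additional posterior and all these masses come out strictly positive. The $\min_{z\in Z}\|\cdot-z\|^2$ penalty is what makes the maximizer set controllable, the bump-function split is what achieves the decoupling, and the remaining checks (Bayes-plausibility, feasibility, and the value computation) are routine.
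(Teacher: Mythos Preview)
Your proof is correct and takes a genuinely different route from the paper's. The paper's construction is more elementary: it sets $u_s$ to be the upper semi-continuous step function that equals $1$ on $m$ chosen interior points $q_1,\dots,q_m$, equals $\tfrac12$ on the vertices $e_1,\dots,e_k$, and is $0$ elsewhere; the constraint functions are nonnegative continuous bumps with $f_i(q_j)=\mathbbm{1}[i=j]$ and $c_i=\tfrac{1}{2m}$. The upper bound $E[\Sigma,u_s]\le\tfrac34$ then follows by a one-line counting argument (each constraint caps the mass at $q_i$), and uniqueness of the attaining scheme is read off from Bayes-plausibility. Your approach instead keeps $u_s$ smooth (the quadratic $\sum_i q_i^2$), manufactures the constraint functions via a partition-of-unity split of the auxiliary function $g$, and proves optimality and uniqueness through a Lagrangian/weak-duality bound with carefully chosen multipliers $\nu_l$. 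What your approach buys is a continuous $u_s$ from the outset (the paper only achieves this via a separate remark, Observation~\ref{O2}) together with a natural game-theoretic realization of $u_s$ via a proper scoring rule; what the paper's approach buys is brevity and fewer moving parts to verify.

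One small expository point: in your $m=1$ illustration, the identity $\sum_i q_i^2+(k-1)\min_i q_i\le 1$ with equality precisely at the vertices and at $p_0$ holds only when $p_0$ is the barycenter $(\tfrac1k,\dots,\tfrac1k)$, whereas a few lines earlier you fixed $p_0$ to be an \emph{arbitrary} interior prior. This is harmless, since your general-$m$ construction (which is the actual proof) does work for any interior $p_0$, but you should flag that the $m=1$ shortcut uses the specific choice $p_0=(\tfrac1k,\dots,\tfrac1k)$.
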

    We provide a constructive proof to Proposition~\ref{P3} in Appendix~\ref{app:existence}. Now we establish a stronger bound for ex post constraints.
    \begin{theorem}[Existence of an optimal valid signaling scheme under ex post constraints with a linear-sized support]
    \label{T1}
    Fix a set of ex post constraints. Then either there exists an optimal valid signaling scheme with support size at most $k$ or the set of valid signaling schemes is empty.
    \end{theorem}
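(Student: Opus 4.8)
The plan is to recast the optimization as a finite-dimensional convexity question in the spirit of the concavification approach of~\cite{KG}, with two twists: posteriors are confined to a fixed compact set carved out by the constraints, and it is Assumption~\ref{A8} --- not continuity --- that guarantees an optimum is attained. First, note each ex post constraint specified by a continuous $\f$ and a constant $c$ forces every posterior into the closed set $\{q\in\Delta(\Omega):\f(q)\le c\}$, so, intersecting over the (possibly infinitely many) constraints, $\Sigma$ is valid exactly when $\supp(\Sigma)\subseteq K$ for one fixed closed --- hence compact --- set $K\subseteq\Delta(\Omega)$. By the Blackwell/AMS characterization~\cite{Blackwell,AMS}, the valid schemes are precisely the distributions supported on $K$ with barycenter $\p$; such a distribution has barycenter in $\mathrm{conv}(K)$, and conversely Carath\'eodory realizes any point of $\mathrm{conv}(K)$ by a finitely supported valid scheme. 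Hence valid schemes exist iff $\p\in\mathrm{conv}(K)$, and otherwise the statement is vacuous; assume henceforth $\p\in\mathrm{conv}(K)$.

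Next I would lift to $\mathbb{R}^{k+1}$ and set $H:=\{(q,t):q\in K,\ 0\le t\le u_s(q)\}$, the region under the graph of $u_s$ over $K$. By Assumption~\ref{A8}, $u_s$ is bounded on the compact $K$ and its upper semi-continuity makes $H$ closed; so $H$ is compact and $C:=\mathrm{conv}(H)$ is compact convex. For any valid $\Sigma$, pushing $\Sigma$ forward by $q\mapsto(q,u_s(q))$ yields a distribution on $H$ with barycenter $(\p,E[\Sigma,u_s])$, so $(\p,E[\Sigma,u_s])\in C$; conversely, writing $(\p,t)\in C$ as $\sum_i\lambda_i(q_i,s_i)$ with $(q_i,s_i)\in H$, the scheme $\sum_i\lambda_i\delta_{q_i}$ is valid with value $\sum_i\lambda_i u_s(q_i)\ge\sum_i\lambda_i s_i=t$. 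Therefore the optimal Sender utility equals $t^{\ast}:=\max\{t:(\p,t)\in C\}$, and this maximum is attained because $C$ is compact --- which is exactly where upper semi-continuity is used.

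It remains to represent $(\p,t^{\ast})$ using at most $k$ points of $H$. Every point of $H$, hence of $C$, lies in the $k$-dimensional affine set $M=\{(x,t)\in\mathbb{R}^{k+1}:\sum_{i=1}^k x_i=1\}$, so $\dim C\le k$. If $\dim C\le k-1$, Carath\'eodory inside $\mathrm{aff}(C)$ already writes $(\p,t^{\ast})$ as a convex combination of at most $k$ points of $H$. Otherwise $\mathrm{aff}(C)=M$, and since $(\p,t^{\ast}+\varepsilon)\in M\setminus C$ for all $\varepsilon>0$, the point $(\p,t^{\ast})$ lies on the boundary of $C$ relative to $M$ and hence on a supporting hyperplane $L$ of $C$ within $M$, of dimension $k-1$; the usual face argument gives $C\cap L=\mathrm{conv}(H\cap L)$, so Carath\'eodory inside $L$ yields $(\p,t^{\ast})=\sum_{i=1}^{k}\lambda_i(q_i,s_i)$ with $(q_i,s_i)\in H\cap L$. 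Either way, $\Sigma^{\ast}:=\sum_{i=1}^{k}\lambda_i\delta_{q_i}$ is Bayes-plausible with barycenter $\p$, has $\supp(\Sigma^{\ast})\subseteq\{q_1,\dots,q_k\}\subseteq K$ (so it is valid with support size at most $k$), and satisfies $E[\Sigma^{\ast},u_s]=\sum_i\lambda_i u_s(q_i)\ge\sum_i\lambda_i s_i=t^{\ast}$, so it is optimal.

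The two delicate points are (i) ensuring an optimum is \emph{attained} --- the graph of the merely upper semi-continuous $u_s$ need not be closed, so one must instead use the region $H$ below it, whose compactness (and that of $C$) does the job --- and (ii) shaving the Carath\'eodory count from the generic $k+1$ down to $k$, for which the key is that $C$ already lives in the $k$-dimensional slice $M$ and that the optimal point, being vertically extremal, lies on the relative boundary of $C$ and thus on a $(k-1)$-dimensional supporting face. The same outline, with the ex ante constraints supplying $m$ additional supporting hyperplanes rather than shrinking $K$, recovers the $k+m$ bound of~\cite{DS}.
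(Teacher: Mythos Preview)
Your proof is correct and takes a genuinely different route from the paper. The paper works in the space of probability measures: it sets up an infinite LP, uses the L\'evy--Prokhorov metric and weak convergence to show the objective $\Sigma\mapsto E[\Sigma,u_s]$ is upper semi-continuous, invokes Bauer's maximum principle to obtain an extreme optimal measure, bounds the support of extreme points by $2^k$ via a hyperplane-doubling argument, and finally reduces to a finite LP with $k$ constraints to reach the bound $k$. You instead stay entirely in finite dimensions: you lift to the hypograph $H\subseteq\mathbb{R}^{k+1}$, observe that upper semi-continuity makes $H$ (hence $C=\mathrm{conv}(H)$) compact, identify the optimal value with the vertically maximal point of $C$ over $\p$, and then apply Carath\'eodory on a supporting face of dimension $\le k-1$ to extract at most $k$ points of $H$. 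This is essentially the concavification argument of~\cite{KG}, but made to work for an arbitrary compact (possibly non-convex) $K$ and for merely upper semi-continuous $u_s$ --- the paper itself remarks in a footnote that concavification handles only the convex-$K$ case directly, so your argument sharpens that observation. Your approach is more elementary (no measure-theoretic compactness, no Bauer), while the paper's infinite-LP formulation has the advantage that the ex ante extension (Appendix~\ref{app:existence_alternative}) is literally the same proof with $m$ extra linear constraints; your closing remark that $m$ additional supporting hyperplanes recover the $k+m$ bound is correct in spirit but would need a bit more care (one must lift further by the $f_i$'s and argue on the boundary of the constrained slice of the resulting convex hull).
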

     At a high level, we translate the problem into an infinite LP, with the ``variables'' being the distribution $\Sigma$ over $\Delta\parentheses*{\Omega}$. We first prove that the target function of the infinite LP is upper semi-continuous. Secondly, we show, using infinite-dimensional optimization tools, that it must attain a maximum at an extreme point of the feasible set. Thirdly, we argue that every extreme point has a finite support of bounded size, analyzing the effect of adding the Bayes-plausibility constraints one by one by considering the hyperplanes specifying the constraints: the maximal support size of extreme points is at most doubled upon each addition. Finally, we improve the bound on the support size of each extreme point using a finite LP.\footnote{If the ex post constraints are convex, the result follows directly from the concavification approach of~\citet{KG}.}
\begin{proof}[Proof of Theorem~\ref{T1}]
Denote the function and the constant specifying the $i$-th ex post constraint ($1\leq i\leq m$) by $\f_i$ and $c_i$, respectively. Let $K:=\cap_{1\leq i\leq m} f_i^{-1}\parhalf*{-\infty, c_i}\subseteq \Delta\parentheses*{\Omega}$ be the set of posteriors that are allowed to belong to a support of a valid signaling scheme. As $f_1,...,f_m$ are continuous, $K$ is compact. 
    
    We would like to solve, assuming $\supp\parentheses*{\Sigma}\subseteq K$:
    \begin{align*}
    \text{max}\;\;\;\;&E\brackets*{\Sigma,u_s}\\
    \text{s.t.}\;\;\;\;\;\;&\p\brackets*{\omega_0}=E\brackets*{\Sigma,\p_{\sigma}\brackets*{\omega_0}}\;\;\forall \omega_0 \in\Omega
    \end{align*}
    The above optimization problem is an infinite LP, with the ``variables'' being the distribution $\Sigma$ over $K$. Consider the metric space of the feasible probability measures on $K$ with the Lévy–Prokhorov metric. Take a sequence $\braces*{\mu_n}_{n\geq 1}$ of feasible probability measures on $K$ that converges to a feasible measure $\mu$. Since $K$, equipped with the Euclidean metric, is separable, we get from a well-known result (e.g., Theorem~4.2 from~\cite{VanGaans}) that $\mu_n$ weakly converges to $\mu$.\footnote{The well-known result states that for a separable metric space $\parentheses*{X,d}$, convergence of measures on it in the Lévy–Prokhorov metric and weak convergence of measures are equivalent.} $u_s$ is upper semi-continuous and defined on a compact set; thus, it is bounded from above. From one of the equivalent definitions of weak convergence of measures:
   \begin{equation*}
   \limsup E\brackets*{\mu_n, u_s}\leq E\brackets*{\mu, u_s}.
   \end{equation*}
   Therefore, the target function in the infinite LP is upper semi-continuous with respect to the Lévy–Prokhorov metric on the space of the feasible probability measures and the usual metric on $\mathbb{R}_{\geq 0}$. This completes our first step.

    The target function is upper semi-continuous and linear, and the feasible set of measures is compact and convex; thus, Bauer's maximum principle (e.g., Theorem~7.69 from~\cite{Bauer}) yields that an optimum is attained at an extreme point (unless the feasible set is empty and no valid signaling scheme exists), which completes our second step. It remains to show that every extreme point of the feasible set has support of size at most $k$.
    
    A general approach adapted from~\cite{Richter,MR,Karr} shows that every extreme point has a finite support with size at most $2^{k}$. This is because every constraint in the infinite LP is defined by a hyperplane; when adding the hyperplanes one by one -- the maximal support size of extreme points is at most doubled upon each addition. It completes our third step.
    
    Finally, discretize our LP by setting $\absolute*{\supp\parentheses*{\Sigma}}\leq 2^{k}$ and considering each of the infinitely many candidates for $\supp\parentheses*{\Sigma}$ separately. Each candidate defines a finite LP with $2^{k}$ variables and $k$ constraints (we should add a constraint ensuring that the probability masses in $\Sigma$ sum up to $1$, but then one of the Bayes-plausibility constraints becomes redundant). Thus, every extreme point of the infinite LP -- which is an extreme point of some finite LP -- is supported on at most $k$ coordinates, which completes the proof.
\end{proof}
    
    \begin{observation}
    \label{O3}
    The bound from Theorem~\ref{T1} is achieved, e.g., by $u_s\parentheses*{p_{\sigma}}:=||p_{\sigma}||_{\infty}$ and a set of trivial ex post constraints.
    \end{observation}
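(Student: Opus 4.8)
The plan is to exhibit a single instance — with the empty set of ex post constraints, so that validity coincides with Bayes-plausibility — in which every optimal signaling scheme has support of size exactly $k$, thereby showing the bound of Theorem~\ref{T1} cannot be improved. Take the prior $\p$ to be any full-support distribution on $\Omega$ (e.g.\ the uniform one) and set $u_s\parentheses*{\p_\sigma}:=\|\p_\sigma\|_\infty=\max_{\omega\in\Omega}\p_\sigma\brackets*{\omega}$. This $u_s$ is continuous — hence upper semi-continuous, so Assumption~\ref{A8} holds and Theorem~\ref{T1} applies — nonnegative, and convex; since it is already state-independent, it is a legitimate Sender utility by Remark~\ref{R1}.

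First I would record the elementary fact that $\|q\|_\infty\le 1$ for every $q\in\Delta\parentheses*{\Omega}$, with equality precisely when $q$ is a vertex of the simplex, i.e.\ a point mass on some state. Consequently, for any Bayes-plausible $\Sigma$ one has $E\brackets*{\Sigma,u_s}=\mathbb{E}_{\p_\sigma\sim\Sigma}\brackets*{\|\p_\sigma\|_\infty}\le 1$. Next, the fully-revealing scheme $\Sigma^{\ast}$ — which reveals the realized state and hence induces the point mass on $\omega$ with probability $\p\brackets*{\omega}$ — is Bayes-plausible and attains $E\brackets*{\Sigma^{\ast},u_s}=1$; thus the Sender's optimal value is $1$, and, because $\p$ has full support, $\supp\parentheses*{\Sigma^{\ast}}$ consists of exactly $k$ posteriors.

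It then remains to show that $\Sigma^{\ast}$ is the \emph{unique} optimizer, which rules out any optimal scheme with smaller support. If $\Sigma$ is optimal, then $\mathbb{E}_{\p_\sigma\sim\Sigma}\brackets*{1-\|\p_\sigma\|_\infty}=0$ with a nonnegative integrand, so $\|\p_\sigma\|_\infty=1$ almost surely; as the set $\braces*{q\in\Delta\parentheses*{\Omega}:\|q\|_\infty=1}$ of point masses is closed, $\supp\parentheses*{\Sigma}$ is contained in it. Applying Bayes-plausibility coordinate by coordinate then forces the mass $\Sigma$ places on the point mass at $\omega$ to be exactly $\p\brackets*{\omega}$ for every $\omega$, i.e.\ $\Sigma=\Sigma^{\ast}$. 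Hence in this instance the minimum support size over optimal valid schemes is $k$, matching Theorem~\ref{T1}.

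I do not anticipate a real obstacle; the only step requiring a little care is the uniqueness argument, where one must first deduce from optimality that \emph{all} posteriors of an optimal scheme are point masses and only then invoke Bayes-plausibility to pin the mixing weights down uniquely. (Alternatively, both the value $1$ and the structure of the optimizers can be read off from the concavification viewpoint of~\citet{KG}: the concave closure on $\Delta\parentheses*{\Omega}$ of the convex function $\|\cdot\|_\infty$ is the constant function $1$, attained only by full revelation.)
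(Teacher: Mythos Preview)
Your argument is correct. The paper states Observation~\ref{O3} without proof, so there is no approach to compare against; your proposal supplies exactly the details one would expect: the full-support prior ensures the fully-revealing scheme has support of size $k$, the bound $\|q\|_\infty\le 1$ with equality only at vertices shows the optimal value is $1$, and the uniqueness step (almost-sure equality forces all posteriors to be point masses, then Bayes-plausibility pins down the weights) is handled cleanly.
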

    
\section{Computational Aspects}
\label{sec:computational}
In this section, we provide positive computational results for a constant number of states of nature $k$. We focus on constant $k$ since a hardness result of~\citet{DX} implies that unless $P=NP$, there is neither an additive PTAS nor a constant-factor multiplicative approximation of the optimal Sender's utility in $\poly(k)$-time, even for piecewise constant $u_s$.\footnote{Their result is on public persuasion with multiple Receivers, which can be replaced by a single Receiver with a large action space.} Our results are for ex ante constraints; by Observation~\ref{O1}, they hold also for ex post constraints. Throughout this section, we assume that both $u_s$ and the functions specifying the constraints are given by explicit formulae and can be evaluated at every point in constant time.

Call \emph{$L$-Lipschitz} a function with Lipschitz constant being at most $L$. Our first main result is an additive bi-criteria approximation (Theorem~\ref{T2}). Part $1$ of Theorem~\ref{T2} is an additive bi-criteria FPTAS for $O(1)$-Lipschitz or piecewise constant $u_s$ and a natural constraint family that includes entropy, KL divergence and norms. This result encompasses the utility functions that naturally arise in applications of Bayesian persuasion: piecewise constant if Receiver has finitely many actions and $O(1)$-Lipschitz if Receiver has a continuum of actions~\cite{Dughmi}. Specifically, we show how to compute in $\poly\parentheses*{m, \frac{1}{\epsilon}}$-time a signaling scheme achieving utility that is additively at most $\epsilon$-far from optimal and violating each of the $m$ ex ante constraints by at most~$\epsilon$; Bayes-plausibility is satisfied precisely. Part $2$ of Theorem~\ref{T2} is an additive bi-criteria PTAS, which holds under even weaker assumptions: $u_s$ should be either continuous or piecewise constant and there are no limitations on the ex ante constraints. The same approximation algorithm implies both parts of Theorem~\ref{T2}.

Our second main result (Theorem~\ref{T3}) is an improvement of the bi-criteria approximations from Theorem~\ref{T2} to single-criteria; it requires imposing a Slater-like regularity condition on the ex ante constraints. We provide the main steps of the proofs of our computational results in Subsection~\ref{sub:proofs}, while the remaining details are in Appendix~\ref{app:computational}.

To make the theorem statements as general as we can, we have introduced some technical assumptions. To present some motivations for the general results and improve clarity, we first state two special cases of our main results. First, for every continuous $u_s$, there exists an additive bi-criteria PTAS for an optimal signaling scheme; secondly, the same algorithm is an additive bi-criteria FPTAS when both $u_s$ and the functions specifying the ex ante constraints are $O(1)$-Lipschitz; both results improve to single-criteria approximations under a Slater-like regularity condition.

\begin{corollary}[of Theorems~\ref{T2},~\ref{T3}]
\label{C1}
Suppose that $k$ is constant, $u_s$ is continuous and given are $m$ ex ante constraints s.t. the set of valid signaling schemes is nonempty. Then for every $\epsilon>0$, there exists an algorithm that computes an additively $\epsilon$-optimal signaling scheme that violates each ex ante constraint at most by $\epsilon$, which has running time of:
\begin{enumerate}
    \item $\poly\parentheses*{m}$, provided that $\epsilon$ is constant.
    \item $\poly\parentheses*{m, \frac{1}{\epsilon}}$, provided that both $u_s$ and the functions specifying the ex ante constraints are $O(1)$-Lipschitz.
\end{enumerate}
Furthermore, if there exists a signaling scheme satisfying each ex ante constraint with strict inequality, then the above algorithm can be improved so that each ex ante constraint is satisfied precisely.
\end{corollary}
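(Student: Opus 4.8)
The plan is to prove \Cref{C1} directly by the discretization-and-linear-programming method that also underlies \Cref{T2,T3}; the hypotheses here (continuous $u_s$, no structure on the constraints beyond the continuity already built into \Cref{D1}) are clean enough that the general-purpose bookkeeping can be skipped. First I would fix a mesh parameter $\delta>0$ and take a standard grid triangulation of $\Delta(\Omega)$ into simplicial cells of Euclidean diameter at most $\delta$; since $k$ is constant, the set $G_\delta$ of vertices of this triangulation has size $\poly(1/\delta)$. I then write a finite LP with variables $\braces*{x_q}_{q\in G_\delta}$ (the probability of inducing posterior $q$), objective $\sum_{q} x_q\, u_s(q)$, the exact Bayes-plausibility equalities $\sum_q x_q\,q=\p$ and $\sum_q x_q=1$ (one of which is redundant, as the $k$ coordinate equalities already force $\sum_q x_q=1$), nonnegativity $x_q\ge 0$, and each ex ante constraint relaxed by $\epsilon$, i.e.\ $\sum_q x_q f_j(q)\le c_j+\epsilon$ for $j=1,\dots,m$. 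The algorithm solves this LP and outputs an optimal solution, read as a signaling scheme supported on $G_\delta$.

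Two claims then need verification. \emph{Relaxation quality.} Because the set of valid schemes is nonempty, \Cref{F1} supplies an optimal valid scheme $\Sigma^{*}$ with $\absolute*{\supp\parentheses*{\Sigma^{*}}}\le k+m$. Each posterior $q$ in its finite support lies in some cell with vertices $v_0,\dots,v_k\in G_\delta$; writing $q=\sum_i\lambda_i v_i$ in barycentric coordinates and replacing the atom of mass $x_q$ at $q$ by atoms of mass $x_q\lambda_i$ at the $v_i$ yields a grid-supported scheme with \emph{exactly} the same mean $\p$ --- hence still Bayes-plausible --- in which every posterior has moved by at most $\delta$. Since $u_s$ and the continuous functions $f_j$ are uniformly continuous on the compact simplex, taking $\delta$ small enough (below the relevant modulus of continuity at scale $\epsilon$) makes the objective drop by at most $\epsilon$ and each $f_j$ increase by at most $\epsilon$, so this scheme is feasible for the relaxed LP with value $\ge\mathrm{OPT}-\epsilon$. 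Conversely, every feasible LP solution is a genuine Bayes-plausible scheme violating each constraint by at most $\epsilon$, so the LP optimum has the guarantees of \Cref{C1}. \emph{Running time.} The LP has $\poly(1/\delta)$ variables and $m+k+1$ constraints, hence is solved in $\poly(m,1/\delta)$ time. In part~1, $\epsilon$ is a constant, so the needed $\delta$ is a constant and $\poly(1/\delta)$ is a constant, giving $\poly(m)$ time; in part~2, $u_s$ and the $f_j$ are $O(1)$-Lipschitz, so $\delta=\Theta(\epsilon)$ suffices, giving $\poly(m,1/\epsilon)$ time.

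For the single-criteria improvement, I would first extract an interior scheme. Running the ``max-slack'' variant of the LP above --- maximize $t$ subject to the Bayes-plausibility equalities and $\sum_q x_q f_j(q)\le c_j-t$ for all $j$ --- on a grid fine enough (which, by the Slater-like hypothesis and the same barycentric rounding, exists) returns a grid-supported Bayes-plausible scheme $\Sigma^{0}$ all of whose constraints have slack at least some $\gamma>0$, read off as the LP value. I then run the bi-criteria algorithm with a smaller tolerance $\epsilon'$ to get a scheme $\Sigma$, and output the mixture $\Sigma_\alpha:=(1-\alpha)\Sigma+\alpha\Sigma^{0}$ with $\alpha:=\epsilon'/(\epsilon'+\gamma)$. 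Convexity preserves Bayes-plausibility, and for every $j$ we get $E[\Sigma_\alpha,f_j]\le(1-\alpha)(c_j+\epsilon')+\alpha(c_j-\gamma)=c_j$, so no constraint is violated; since $u_s\ge 0$ is bounded on the simplex by some $U$, the objective loss is at most $\epsilon'+\alpha\,\mathrm{OPT}\le\epsilon'+\alpha U=\epsilon'(\gamma+U)/\gamma$, which is $\le\epsilon$ once $\epsilon'$ is set to the constant multiple $\epsilon\gamma/(\gamma+U)$ of $\epsilon$. As $\epsilon'=\Theta(\epsilon)$ and the auxiliary LP has the same size, the running-time bounds of parts~1 and~2 are unaffected up to constants depending on $\gamma$ and $U$.

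The step I expect to be the crux --- and the main obstacle --- is keeping Bayes-plausibility \emph{exactly} satisfied through discretization: rounding posteriors to the nearest grid point destroys the mean constraint and patching it up is messy, whereas the barycentric replacement within triangulation cells both preserves the mean exactly and moves every posterior by only the cell diameter, which is precisely what a bi-criteria guarantee needs. Two smaller points also require care: for the pure PTAS of part~1 one must turn uniform continuity of an arbitrary continuous $u_s$ into an explicit bound on $1/\delta$, which is where the standing assumption that $u_s$ is given by an evaluable explicit formula enters; and the single-criteria improvement is ``for free'' only because $u_s$ is nonnegative and bounded and the Slater gap is a positive constant, so that the mixing weight $\alpha$ can be taken proportional to $\epsilon$.
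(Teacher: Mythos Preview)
Your bi-criteria argument is correct and is essentially the paper's own proof: the paper derives \Cref{C1} from \Cref{T2,T3}, which in turn rest on \Cref{L3}, and \Cref{L3} does exactly what you do---solve an LP over the vertices of a fine triangulation, and compare to the optimum by redistributing each posterior of $\Sigma_{OPT}$ to the vertices of its cell ``in an expectation-preserving way'' so that Bayes-plausibility is kept exactly and only the uniform-continuity error is incurred. Your identification of the barycentric replacement as the crux is precisely the point.

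Your single-criteria step is also correct but differs from the paper's \Cref{L4}. You run the bi-criteria algorithm at tolerance $\epsilon'$ and then \emph{mix the output} with a grid-supported strictly feasible scheme $\Sigma^{0}$ of slack $\gamma$, choosing $\alpha=\epsilon'/(\epsilon'+\gamma)$ so the constraints close exactly; the paper instead \emph{strengthens each constraint} by $\tfrac{\epsilon}{2}$, applies the bi-criteria algorithm once, and uses a mixture only in the analysis to bound the optimality loss. Your route is arguably more elementary (the classic ``mix with an interior point'' trick) and gives a clean loss bound $\epsilon'(\gamma+U)/\gamma$; the paper's route avoids computing $\Sigma^{0}$ separately but needs the slightly more delicate comparison in the proof of \Cref{L4}. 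Both approaches share the same hidden dependence: the running time depends on the Slater slack $\gamma$ (you through $\epsilon'=\Theta(\gamma\epsilon)$, the paper through its assumption that $\epsilon$ is small enough for the strengthened problem to remain feasible, acknowledged in a footnote). In the regime of \Cref{C1} this is treated as a constant, so neither approach loses anything.
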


\begin{remark}
\label{R2}
All the approximation algorithms from Section~\ref{sec:computational} output a solution of a finite LP with $k+m$ constraints; therefore, their output -- which is w.l.o.g.~a basic feasible solution -- is a signaling scheme with support size at most $k+m$, which matches the tight theoretical bound from Fact~\ref{F1}.
\end{remark}
\subsection{Bi-criteria Approximation}
\label{sub:bi}
Here we present an additive bi-criteria FPTAS (Theorem~\ref{T2}, part $1$) for $O(1)$-Lipschitz or piecewise constant Sender's utility functions, under ex ante constraints specified by functions which may include entropy, KL divergence and any norm of $p_{\sigma}-p$ (such as the well-known variation distance between probability measures).\footnote{Note that every norm on $\Delta\parentheses*{\Omega}\subseteq \mathbb{R}^k$ is $O(1)$-Lipschitz, which is sufficient to satisfy Assumption \ref{A6}.} In particular, one can restrict $D_{KL}\parentheses*{p_{\sigma}'||p'}$, where $p_{\sigma}'$ and $p'$ are the distributions induced by $p_{\sigma}$ and $p$ (respectively) on some partition of $\Omega$; that is, some elements of $\Omega$ are united when computing the KL divergence. Practically, it can be exploited in online ad auctions to limit the expected information disclosure on habits of a certain social group; such a group is represented by a subset of $\Omega$.

\begin{assumption}[$u_s$ is $O(1)$-Lipschitz or piecewise constant -- required for the additive bi-criteria FPTAS]
\label{A5}
$u_s$ is either $O(1)$-Lipschitz or piecewise constant, with a constant number of pieces, s.t.~each piece covers a convex polygon in $\Delta\parentheses*{\Omega}$ with a constant number of vertices.
\end{assumption}

\begin{assumption}[The ex ante constraints are specified by $O(1)$-Lipschitz functions, entropy or KL divergence -- required for the additive bi-criteria FPTAS]
\label{A6}
Each ex ante constraint is specified either by an $O(1)$-Lipschitz function or by a function of the form:
\begin{equation*}
b\cdot \sum_{1\leq j\leq l} \parentheses*{\sum_{\omega'\in\Omega_j} p_{\sigma}\brackets*{\omega'}}\ln\frac{\sum_{\omega'\in\Omega_j} p_{\sigma}\brackets*{\omega'}}{b_j},
\end{equation*}
where $\braces*{\Omega_j}_{1\leq j\leq l}$ is a partition of $\Omega$ and $b,b_1,...,b_l$ are constants ($b_1,...,b_l>0$).
\end{assumption}

We further show that under no assumptions on the ex ante constraints and under a weaker assumption on $u_s$ -- being continuous or piecewise constant -- the same algorithm provides an additive bi-criteria PTAS (Theorem~\ref{T2}, part $2$).

\begin{assumption}[$u_s$ is continuous or piecewise constant -- relaxation of Assumption \ref{A5}; required for the additive bi-criteria PTAS]
\label{A7}
$u_s$ is either continuous or piecewise constant, with a constant number of pieces, s.t.~each piece covers a convex polygon in $\Delta\parentheses*{\Omega}$ with a constant number of vertices.
\end{assumption}

\begin{theorem}[An additive bi-criteria FPTAS/PTAS for an optimal valid signaling scheme]
\label{T2}
Fix a constant $k$ and fix $m$ ex ante constraints s.t. the set of valid signaling schemes is nonempty.
\begin{enumerate}
    \item Suppose that $u_s$ satisfies Assumption~\ref{A5} and the ex ante constraints satisfy Assumption~\ref{A6}. Then for every $\epsilon>0$, there exists a $\poly\parentheses*{m, \frac{1}{\epsilon}}$-algorithm that computes an additively $\epsilon$-optimal signaling scheme that violates each ex ante constraint at most by $\epsilon$.
    \item Suppose that $u_s$ satisfies Assumption~\ref{A7}. Then for every constant $\epsilon>0$, there exists a $\poly\parentheses*{m}$-algorithm that computes an additively $\epsilon$-optimal signaling scheme that violates each ex ante constraint at most by $\epsilon$.
\end{enumerate}
\end{theorem}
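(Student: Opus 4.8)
The plan is to prove Theorem~\ref{T2} by a discretization-plus-LP argument in the spirit of~\cite{CCDEHT}. First I would fix a grid $G_N$ of mesh size $\Theta(1/N)$ inside the simplex $\Delta\parentheses*{\Omega}$ — for instance all points whose coordinates are multiples of $1/N$ — and restrict attention to signaling schemes whose support lies in $G_N$. The key structural fact is that $G_N$ has only $\poly(N)$ many points because $k$ is constant, so we can write a finite LP whose variables are the probability masses $\braces*{q_v}_{v\in G_N}$ placed on grid posteriors: maximize $\sum_v q_v u_s(v)$ subject to the $k$ Bayes-plausibility equalities $\sum_v q_v v = p$, the mass-normalization $\sum_v q_v = 1$, and the $m$ relaxed ex ante inequalities $\sum_v q_v f_i(v) \le c_i + \epsilon/2$. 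This LP has $\poly(N)$ variables and $k+m$ nontrivial constraints, so it is solvable in $\poly(m, N)$ time, and by standard LP theory the optimum is attained at a basic feasible solution with support size at most $k+m$ (cf.\ Remark~\ref{R2}).

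The heart of the argument is the two-way approximation between this grid LP and the true optimization problem. For the completeness direction, I would take an optimal valid scheme $\Sigma^*$ — which by Fact~\ref{F1} has finite support $\braces*{p_1,\dots,p_t}$ with $t\le k+m$ — and round each $p_j$ to a nearby grid point $\hat p_j\in G_N$ within distance $O(1/N)$. The subtlety is that naive coordinatewise rounding breaks Bayes-plausibility, so instead I would move the mass of each $p_j$ to an appropriate convex combination of grid vertices of the small simplex cell containing $p_j$ (a barycentric-coordinate rounding): this keeps the mean at $p_j$ exactly, hence preserves $\sum_j \lambda_j \hat p_j = p$, and only perturbs each posterior by $O(1/N)$. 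Then $u_s$ and each $f_i$ change by at most their modulus of continuity on a ball of radius $O(1/N)$; under Assumption~\ref{A5}/\ref{A6} these are $O(1)$-Lipschitz (the entropy/KL functions are Lipschitz on the part of the simplex bounded away from the boundary, and one handles the boundary separately using the fact that $x\ln x\to 0$), so choosing $N=\Theta(1/\epsilon)$ makes the objective drop by at most $\epsilon$ and each constraint value rise by at most $\epsilon/2$, landing inside the relaxed feasible region — this gives the $\poly(m,1/\epsilon)$ FPTAS of part~1. Under the weaker Assumption~\ref{A7} ($u_s$ merely continuous, constraints arbitrary continuous), uniform continuity on the compact simplex still gives, for each constant $\epsilon$, a constant $N=N(\epsilon)$ with the same guarantee, yielding the $\poly(m)$ PTAS of part~2. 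The soundness direction is easier: any feasible grid solution is literally a Bayes-plausible scheme (the $q_v$ sum to one and average to $p$) violating each constraint by at most $\epsilon/2<\epsilon$, so whatever the LP outputs is a legitimate bi-criteria solution; and since the rounded $\Sigma^*$ is one feasible grid point with objective $\ge \mathrm{OPT}-\epsilon$, the LP optimum is $\ge \mathrm{OPT}-\epsilon$ as well.

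For the piecewise-constant case of $u_s$ I would argue slightly differently, since rounding a posterior across a piece boundary can change $u_s$ by a constant rather than by $O(1/N)$. Here I would exploit Assumption~\ref{A5}/\ref{A7}: there are only $O(1)$ pieces, each a convex polygon with $O(1)$ vertices, so the subdivision of $\Delta\parentheses*{\Omega}$ into pieces is itself a fixed polyhedral complex; I would take the grid to be compatible with (refining) this complex, or alternatively add the piece-defining hyperplanes to the LP and optimize over which piece each support point lies in, keeping the number of sub-LPs constant. Then $u_s$ is exactly constant on each relevant grid cell and the rounding incurs zero objective error from $u_s$ itself (only the constraint functions move by $O(1/N)$), and the same bound goes through.

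The main obstacle I expect is the interaction between Bayes-plausibility and discretization: one cannot simply snap posteriors to the nearest grid point, because that destroys the mean-consistency constraint $\sum_j \lambda_j p_j = p$, and a feasible point of a "nearest-grid" relaxation need not correspond to any actual signaling scheme. The barycentric rounding fixes this but requires care that the grid cells are genuine simplices (so barycentric coordinates are well defined and nonnegative) and that the perturbation bound $O(1/N)$ is uniform over all cells. A secondary technical annoyance is the unboundedness of the derivatives of entropy and KL divergence near the boundary of the simplex, which I would handle by a separate near-boundary estimate using $|x\ln x - y\ln y| = O(|x-y|\log(1/|x-y|))$ — still enough to make the constraint perturbation $o(1)$ as $N\to\infty$, hence $\le\epsilon$ for $N$ a suitable $\poly(1/\epsilon)$.
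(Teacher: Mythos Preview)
Your proposal is correct and follows essentially the same route as the paper: discretize $\Delta\parentheses*{\Omega}$, solve a finite LP with $\epsilon$-relaxed ex ante constraints over grid-supported distributions, and establish near-optimality by barycentrically (mean-preservingly) rounding a finite-support optimal scheme from Fact~\ref{F1} onto the grid, with a piece-compatible grid in the piecewise-constant case and a $\poly\parentheses*{1/\epsilon}$-mesh (via a near-boundary estimate) for entropy/KL constraints. The paper packages the argument through an intermediate Lemma~\ref{L3} and abstract approximability Assumptions~\ref{A2}--\ref{A4}, but the grid, the LP, the barycentric rounding, and the uniform-continuity argument for part~2 are exactly the ingredients you describe.
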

\subsection{Single-criteria Approximation}
\label{sub:single}
So far, we have demonstrated how to find a near-optimal signaling scheme that satisfies the ex ante constraints after slightly relaxing them. The relaxation is required to avoid degenerate cases. For example, finding the root of a polynomial with a single real root can be described in the language of ex ante constraints. This problem has a unique feasible distribution and if we do not relax the constraints, any algorithm missing the exact real root cannot give a satisfactory approximation. Theorem~\ref{T2} can be improved under a regularity condition disallowing such degenerate cases.
\begin{assumption}[Slater-like regularity condition]
\label{A3}
There exists a signaling scheme satisfying all the given ex ante constraints with strict inequality.
\end{assumption}
\begin{theorem}[An additive FPTAS/PTAS for an optimal valid signaling scheme]
\label{T3}
Fix a constant $k$ and fix $m$ ex ante constraints satisfying Assumption~\ref{A3}.
\begin{enumerate}
    \item Suppose that $u_s$ satisfies Assumption~\ref{A5} and the ex ante constraints satisfy Assumption~\ref{A6}. Then for every $\epsilon>0$, there exists a $\poly\parentheses*{m, \frac{1}{\epsilon}}$-algorithm that computes an additively $\epsilon$-optimal valid signaling scheme.
    \item Fix a constant $\epsilon>0$ and suppose that $u_s$ satisfies Assumption~\ref{A7}. Then there exists a $\poly\parentheses*{m}$-algorithm that computes an additively $\epsilon$-optimal valid signaling scheme.
\end{enumerate}
\end{theorem}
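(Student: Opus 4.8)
The plan is to reduce the single-criteria problem to the bi-criteria one from Theorem~\ref{T2} by exploiting the slack guaranteed by Assumption~\ref{A3}. First I would fix a signaling scheme $\Sigma^\*$ that satisfies every ex ante constraint with strict inequality, say with margin $\delta>0$: $E[\Sigma^\*,\f_i]\leq c_i-\delta$ for all $i$. Next I would define a perturbed constrained problem in which each threshold $c_i$ is tightened to $c_i-\epsilon'$ for a suitably small $\epsilon'$ depending on $\epsilon$ and $\delta$ (concretely $\epsilon'=\min\{\delta/2,\epsilon\cdot\delta/(2C)\}$ for an appropriate constant $C$ bounding the gap between optimal utilities of nearby problems). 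The tightened problem is still feasible — $\Sigma^\*$ witnesses this as long as $\epsilon'\leq\delta$ — so we may run the bi-criteria algorithm of Theorem~\ref{T2} on it with violation parameter $\epsilon'$. Its output $\hat\Sigma$ then satisfies $E[\hat\Sigma,\f_i]\leq (c_i-\epsilon')+\epsilon' = c_i$, i.e.\ it is \emph{exactly valid} for the original constraints; it remains only to argue that its utility is within $\epsilon$ of the original optimum.

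For the utility bound, let $\mathrm{OPT}$ be the optimal Sender utility under the original constraints and $\mathrm{OPT}_{\epsilon'}$ the optimum under the $\epsilon'$-tightened constraints. The key quantitative step is a Lipschitz-type estimate showing $\mathrm{OPT}-\mathrm{OPT}_{\epsilon'}=O(\epsilon'/\delta)$: given any valid $\Sigma$ achieving near-$\mathrm{OPT}$, the convex combination $\Sigma_\lambda = (1-\lambda)\Sigma+\lambda\Sigma^\*$ (which is again Bayes-plausible, since Bayes-plausibility is affine in $\Sigma$) satisfies each tightened constraint once $\lambda\geq\epsilon'/\delta$, by linearity of the ex ante constraint functionals in the measure $\Sigma$; and since $u_s$ is bounded and $E[\cdot,u_s]$ is linear in $\Sigma$, the utility drops by at most $O(\lambda)=O(\epsilon'/\delta)$ when passing from $\Sigma$ to $\Sigma_\lambda$. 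Choosing $\epsilon'$ as above makes this loss at most $\epsilon/2$, and the bi-criteria algorithm loses a further $\epsilon'\leq\epsilon/2$ in utility, so $\hat\Sigma$ is additively $\epsilon$-optimal. Since $\delta$ is a fixed constant of the instance and $\epsilon'=\Theta(\epsilon)$, invoking Theorem~\ref{T2} with parameter $\epsilon'$ preserves the running-time bounds: $\poly(m,1/\epsilon)$ under Assumptions~\ref{A5} and~\ref{A6} for part~1, and $\poly(m)$ for constant $\epsilon$ under Assumption~\ref{A7} for part~2.

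I expect the main obstacle to be the quantitative Lipschitz estimate relating $\mathrm{OPT}$ to $\mathrm{OPT}_{\epsilon'}$ — in particular, making precise that the relevant constant $C$ (hence the dependence of $\epsilon'$ on $\delta$) is genuinely a constant of the instance, not something that degrades with $m$. This requires care because there are $m$ constraints and the convex-combination trick must push \emph{all} of them inside simultaneously; fortunately a single mixing parameter $\lambda$ works for all constraints at once since $\Sigma^\*$ has margin $\delta$ in every coordinate, so the estimate is uniform in $m$. A secondary subtlety is that Assumption~\ref{A3} only gives existence of a strictly feasible scheme, not a bound on its margin $\delta$ or a way to compute it; this is fine for the existence statement of the theorem, but if an explicit algorithm is desired one would either assume $\delta$ is known or search for it geometrically — I would handle this by treating $\delta$ as part of the (constant-size) instance description, consistent with the paper's standing conventions on constant $k$.
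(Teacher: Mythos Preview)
Your approach is correct and matches the paper's high-level strategy: tighten every threshold by some $\epsilon'$, invoke the bi-criteria algorithm on the tightened instance, observe that the output is then exactly valid for the original constraints, and control the utility loss via a convex combination. The paper's proof (via Lemma~\ref{L4}) does exactly this with $\epsilon'=0.5\epsilon$.

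The difference is in how the utility loss is bounded. You mix an optimal scheme with the abstract Slater witness $\Sigma^{*}$, taking $\lambda=\epsilon'/\delta$; this is the textbook Slater perturbation argument, uses Theorem~\ref{T2} as a black box, and makes the precision parameter $\epsilon'=\Theta(\epsilon\delta)$ depend on the margin $\delta$. The paper instead opens up the proof of Lemma~\ref{L3}: it takes two \emph{algorithmic} outputs---$\Sigma'$ from the bi-criteria algorithm on the original problem at precision $0.125\epsilon^3$, and $\Sigma''$ from the $\epsilon$-tightened problem at precision $0.5\epsilon$---forms the combination $\tfrac{1}{1+0.25\epsilon^2}\Sigma'+\tfrac{0.25\epsilon^2}{1+0.25\epsilon^2}\Sigma''$, and uses that all three schemes live on the same discretization grid to compare directly against the returned $\Sigma$. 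This yields $E[\Sigma,u_s]\ge\mathrm{OPT}-\epsilon$ with no $\delta$ appearing in the precision or the error bound; Assumption~\ref{A3} enters only through the requirement that $\epsilon$ be small enough for the tightened problems to remain feasible (and the paper explicitly assumes such a threshold is known, which is equivalent to your treating $\delta$ as part of the instance). In short, your route is more modular and standard; the paper's is less black-box but confines the $\delta$-dependence to a feasibility threshold rather than letting it propagate into the running time.
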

\subsection{Proofs of the Computational Results}
\label{sub:proofs}
In this subsection, we first formulate and prove Lemma~\ref{L3} (together with two technical assumptions), which is the main step in the proofs of our results from Section \ref{sec:computational}. The first and second parts of Theorem~\ref{T2} follow from this lemma, with $t\parentheses*{\frac{1}{\epsilon}}:=\frac{1}{\epsilon}$ and $t\parentheses*{\frac{1}{\epsilon}}:= 1$, respectively; proof details are given in Appendix~\ref{app:computational}. Then we strengthen Lemma~\ref{L3} by adding the regularity Assumption~\ref{A3} to get Lemma~\ref{L4}, and we prove the latter. Note that the proof of Theorem~\ref{T3} is exactly as for Theorem~\ref{T2}, but it uses Lemma~\ref{L4} rather than Lemma~\ref{L3}.

\begin{assumption}[Parameterized by $t\parentheses*{\frac{1}{\epsilon}}$]
\label{A2}
For every $\epsilon>0$ and every $M=\poly\parentheses*{t\parentheses*{\poly\parentheses*{\frac{1}{\epsilon}}}}$, one can compute in $\poly\parentheses*{t\parentheses*{\poly\parentheses*{\frac{1}{\epsilon}}}}$-time an explicit formula for an upper semi-continuous piecewise constant $u_{\epsilon,M}:\Delta\parentheses*{\Omega}\to \mathbb{R}_{\geq 0}$ s.t.:
     \begin{itemize}
     \item Every piece of $u_{\epsilon,M}$ covers a region of $\Delta\parentheses*{\Omega}$ which is a convex polygon with diameter at most~$\frac{\epsilon}{M}$.
     \item The total number of vertices of the above regions of $\Delta\parentheses*{\Omega}$ is $\poly\parentheses*{t\parentheses*{\poly\parentheses*{\frac{1}{\epsilon}}}}$.
     \item For every $q\in\Delta\parentheses*{\Omega}$ we have: $0\leq u_{\epsilon,M}\parentheses*{q}-u_{s}\parentheses*{q}\leq \epsilon$.
    \end{itemize}
\end{assumption}
  \begin{assumption}[Parameterized by $t\parentheses*{\frac{1}{\epsilon}}$]
  \label{A4}
  For every $1\leq i\leq m$, the $i$-th ex ante constraint is specified by $\f_i:\Delta\parentheses*{\Omega}\to \mathbb{R}$ s.t.~for every $\epsilon>0$, one can compute in $\poly\parentheses*{t\parentheses*{\poly\parentheses*{\frac{1}{\epsilon}}}}$-time an explicit formula for a $\poly\parentheses*{t\parentheses*{\poly\parentheses*{\frac{1}{\epsilon}}}}$-Lipschitz function $\g_i:\Delta\parentheses*{\Omega}\to \mathbb{R}$ s.t.~for every $q\in\Delta\parentheses*{\Omega}$:  
    \begin{equation*}
  0 \leq f_i\parentheses*{q}-g_i\parentheses*{q}\leq \epsilon.
  \end{equation*}
\end{assumption}
 \begin{lemma}
     \label{L3}
     Suppose that $k$ is constant, $u_s$ satisfies Assumption~\ref{A2} with $t\parentheses*{\frac{1}{\epsilon}}$ and we have $m$ ex ante constraints satisfying Assumption~\ref{A4} with $t\parentheses*{\frac{1}{\epsilon}}$. Then either the set of valid signaling schemes is empty or for every $\epsilon>0$, there exists a $\poly\parentheses*{m,t\parentheses*{\poly\parentheses*{\frac{1}{\epsilon}}}}$-time algorithm that computes an additively $\epsilon$-optimal signaling policy that violates each ex ante constraint at most by $\epsilon$.
     \end{lemma}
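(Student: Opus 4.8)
The plan is to reduce the lemma to a single polynomial-size linear program, in the spirit of the discretization-plus-LP approach of~\citet{CCDEHT}. Fix $\epsilon>0$ and put $\epsilon':=\epsilon/3$. First I would invoke Assumption~\ref{A4} with parameter $\epsilon'$ to replace each $\f_i$ by an $L_i$-Lipschitz $\g_i$ with $0\le \f_i-\g_i\le \epsilon'$ and $L_i=\poly(t(\poly(1/\epsilon)))$; set $L:=\max_i L_i$. Then I would invoke Assumption~\ref{A2} with parameters $\epsilon'$ and $M:=3L$ -- admissible, since $M=\poly(t(\poly(1/\epsilon)))$ -- to obtain the upper semi-continuous piecewise constant $u_{\epsilon',M}$ with $0\le u_{\epsilon',M}-u_s\le\epsilon'$, whose pieces cover convex polygons of diameter at most $\epsilon'/M$ with $\poly(t(\poly(1/\epsilon)))$ vertices in total. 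Since $k$ is constant, triangulating each of these (at most $(k-1)$-dimensional) polygons produces $\poly$-many simplices; let $\mathcal G\subseteq\Delta(\Omega)$ be the resulting polynomial-size set of simplex vertices. All of this is computable in $\poly(m,t(\poly(1/\epsilon)))$ time.

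Next I would solve the finite LP over variables $\lambda_q\ge 0$ ($q\in\mathcal G$): maximize $\sum_{q\in\mathcal G}\lambda_q\,u_{\epsilon',M}(q)$ subject to $\sum_q\lambda_q=1$, the Bayes-plausibility equalities $\sum_q\lambda_q q=\p$, and the \emph{relaxed} constraints $\sum_q\lambda_q\,\g_i(q)\le c_i+\epsilon'$ for $1\le i\le m$. This LP has $\poly$-many variables and $k+m$ constraints, hence is solvable in $\poly(m,t(\poly(1/\epsilon)))$ time, and a feasible $\{\lambda_q\}$ is read off as the scheme $\Sigma$ inducing posterior $q$ with probability $\lambda_q$ -- Bayes-plausible exactly, by the equality constraints. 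For any such $\Sigma$ one immediately gets $E[\Sigma,\f_i]\le E[\Sigma,\g_i]+\epsilon'\le c_i+2\epsilon'<c_i+\epsilon$, and $E[\Sigma,u_s]\ge E[\Sigma,u_{\epsilon',M}]-\epsilon'=(\text{LP value})-\epsilon'$, so it remains only to show that the LP is feasible and that its optimum is at least $\mathrm{OPT}$, Sender's optimal valid utility (if the valid set is empty there is nothing to prove).

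To establish both, I would start from an optimal valid scheme $\Sigma^*$, which by Fact~\ref{F1} has support $\{r_1^*,\dots,r_N^*\}$ with $N\le k+m$ and masses $\mu_\ell^*$. Each $r_\ell^*$ lies in $\overline P$ for some piece $P$ with interior value $v_P=u_{\epsilon',M}(r_\ell^*)$, hence in some triangulation simplex $S_\ell\subseteq\overline P$; writing $r_\ell^*=\sum_{v\in\mathrm{vert}(S_\ell)}\alpha_{\ell,v}\,v$ and setting $\lambda_q:=\sum_\ell \mu_\ell^*\alpha_{\ell,q}$ gives $\sum_q\lambda_q=1$ and $\sum_q\lambda_q q=\sum_\ell\mu_\ell^* r_\ell^*=\p$ \emph{exactly}. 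This barycentric redistribution of mass onto grid vertices -- which preserves Bayes-plausibility, unlike naive nearest-point rounding -- is the crux of the argument and the step I expect to be the main obstacle to get right. For the constraints, each vertex $v$ of $S_\ell$ satisfies $\|v-r_\ell^*\|\le\epsilon'/M$, so $\sum_q\lambda_q\g_i(q)\le\sum_\ell\mu_\ell^*\bigl(\g_i(r_\ell^*)+L\epsilon'/M\bigr)\le\sum_\ell\mu_\ell^*\f_i(r_\ell^*)+\epsilon'/3\le c_i+\epsilon'$; for the objective, upper semi-continuity of $u_{\epsilon',M}$ forces $u_{\epsilon',M}(v)\ge v_P$ for every $v\in\overline P$, so the value of this solution is $\sum_\ell\mu_\ell^*\sum_v\alpha_{\ell,v}u_{\epsilon',M}(v)\ge\sum_\ell\mu_\ell^* u_{\epsilon',M}(r_\ell^*)\ge\sum_\ell\mu_\ell^* u_s(r_\ell^*)=\mathrm{OPT}$. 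The one subtlety to watch throughout is that all three approximations are \emph{one-sided} -- $u_{\epsilon',M}\ge u_s$, $\g_i\le \f_i$, and u.s.c.\ gives $u_{\epsilon',M}\ge v_P$ on piece closures -- so the discretized program over-estimates the objective and under-estimates the constraint functions, and hence neither feasibility nor near-optimality is lost in translation; everything else is routine bookkeeping with the constants. Combining the inequalities, $\Sigma$ is additively $\epsilon$-optimal and violates each ex ante constraint by strictly less than $\epsilon$, as required; Theorem~\ref{T2} then follows by instantiating $t(1/\epsilon):=1/\epsilon$ (part~1) and $t(1/\epsilon):=1$ (part~2).
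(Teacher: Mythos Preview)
Your proposal is correct and follows essentially the same route as the paper: discretize $\Delta(\Omega)$ via the pieces of $u_{\epsilon',M}$, solve a finite LP on the resulting grid with relaxed constraints, and certify feasibility and near-optimality by redistributing an optimal valid scheme onto the grid vertices in an expectation-preserving (barycentric) way, using the diameter/Lipschitz bound for the constraints and upper semi-continuity of $u_{\epsilon',M}$ for the objective. The only structural difference is cosmetic: the paper first reduces to the case where the $f_i$ themselves are Lipschitz (by replacing $f_i$ with $g_i$ and halving $\epsilon$) and then runs the argument once, whereas you keep $f_i$ and $g_i$ separate throughout and track the one-sided errors explicitly---but the content is the same.
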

    The proof of Lemma~\ref{L3} first strengthens Assumption~\ref{A4} and assumes that the constraints are specified by $\poly\parentheses*{t\parentheses*{\poly\parentheses*{\frac{1}{\epsilon}}}}$-Lipschitz functions. Then we restrict ourselves to a grid consisting of the vertices of the pieces of $u_{\epsilon,M}$, where $M$ is the maximal Lipschitz constant among the functions specifying the constraints, and output the resultant optimal valid signaling scheme for $u_{\epsilon,M}$ rather than $u_s$. Finally, we estimate the loss in Sender's utility and the constraint values using the approximability guarantees.
    \begin{proof}[Proof of Lemma~\ref{L3}] 
    We strengthen Assumption~\ref{A4} to the following: the $i$-th ex ante constraint ($1\leq i\leq m$) is specified by a $\poly\parentheses*{t\parentheses*{\poly\parentheses*{\frac{1}{\epsilon}}}}$-Lipschitz function $\f_i:\Delta\parentheses*{\Omega}\to \mathbb{R}$ and some constant $c_i$. The original lemma follows from applying the lemma under the strengthened Assumption~\ref{A4} with $\epsilon$ replaced by $\frac{\epsilon}{2}$ and the $f_i$s replaced by the $g_i$s. This is because the original Assumption~\ref{A4} ensures that upon replacing $f_i$ with $g_i$, every valid signaling scheme remains such and $E\brackets*{\Sigma, f_i}$ decreases at most by $\epsilon$.
    
    Now we prove the lemma under the strengthened Assumption~\ref{A4}. Suppose that a valid signaling scheme exists and let $OPT$ be Sender's expected utility under an optimal valid scheme. Fix $\epsilon>0$ and let $M$ be the maximal Lipschitz constant among the $f_i$s. Compute an explicit formula for $u_{\epsilon,M}$. Let $q_1,...,q_n$ be the vertices of the regions of $\Delta\parentheses*{\Omega}$ covered by the pieces of $u_{\epsilon,M}$. Let us solve the following:
    
    \begin{align*}
    \text{max}\;\;\;\;&E\brackets*{\Sigma, u_{\epsilon,M}}\\
    \text{s.t.}\;\;\;\;\;\;&\p\brackets*{\omega_0}=E\brackets*{\Sigma, \p_{\sigma}\brackets*{\omega_0}}\;\; \forall \omega_0 \in\Omega\\
    &\supp\braces*{\Sigma}\subseteq\braces*{q_1,...,q_n}\\
    &E\brackets*{\Sigma, f_i}\leq c_i+\epsilon\;\;\forall 1\leq i\leq m
    \end{align*}
    
     This problem defines a finite LP with $n$ variables and $k+m$ constraints (as in Theorem~\ref{T1} proof, we should add a constraint for the probability masses in $\Sigma$ to sum up to $1$, but then we could remove one of the Bayes-plausibility constraints); this LP can be solved in time $\poly\parentheses*{n,k+m}=\poly\parentheses*{m,t\parentheses*{\poly\parentheses*{\frac{1}{\epsilon}}}}$. We return its solution $\Sigma$ as the desired signaling scheme.
     
    By the design of our LP, $\Sigma$ is Bayes-plausible and violates each ex ante constraint at most by $\epsilon$. Take now a valid optimal signaling scheme $\Sigma_{OPT}$ (for Sender's utility function $u_s$ rather than $u_{\epsilon,M}$). For every piece of $u_{\epsilon,M}$, move all the probability weight in $\Sigma$ from the region covered by this piece to the extreme points of that region in an expectation-preserving way (so Bayes-plausibility still holds) and denote the resultant signaling scheme by $\Sigma_{OPT}'$. Since the diameter of every such region is at most $\frac{\epsilon}{M}$ and the ex ante constraints have Lipschitz constants $\leq M$, we get that each ex ante constraint is violated at most by $\frac{\epsilon}{M}\cdot M=\epsilon$. Thus, $\Sigma_{OPT}'$ is a feasible solution to our LP, so $E\brackets*{\Sigma_{OPT}', u_{\epsilon,M}}\leq E\brackets*{\Sigma, u_{\epsilon,M}}$.
    
    Since $u_{\epsilon,M}$ is upper semi-continuous and piecewise constant we have:\\
    $E\brackets*{\Sigma_{OPT}, u_{\epsilon,M}}\leq E\brackets*{\Sigma_{OPT}', u_{\epsilon,M}}$. Furthermore, the third bullet from Assumption~\ref{A2} yields: $E\brackets*{\Sigma, u_{\epsilon,M}}-E\brackets*{\Sigma, u_{s}}\leq\epsilon$ and $E\brackets*{\Sigma_{OPT}, u_s}\leq E\brackets*{\Sigma_{OPT}, u_{\epsilon,M}}$. Combining the last four inequalities implies: $E\brackets*{\Sigma, u_s}\geq E\brackets*{\Sigma_{OPT}, u_s}-\epsilon=OPT-\epsilon$.
    \end{proof}

Now we formulate and prove Lemma~\ref{L4} -- a strengthening of Lemma~\ref{L3} used to prove Theorem~\ref{T3}.

     \begin{lemma}[Parameterized by $t\parentheses*{\frac{1}{\epsilon}}$]
     \label{L4}
     Suppose that $k$ is constant, $u_s$ satisfies Assumption~\ref{A2} with $t\parentheses*{\frac{1}{\epsilon}}$ and we have $m$ ex ante constraints satisfying Assumption~\ref{A4} with $t\parentheses*{\frac{1}{\epsilon}}$ and Assumption~\ref{A3}. Then for every $\epsilon>0$, there exists a $\poly\parentheses*{m,t\parentheses*{\poly\parentheses*{\frac{1}{\epsilon}}}}$-algorithm computing an additively $\epsilon$-optimal valid signaling policy.
     \end{lemma}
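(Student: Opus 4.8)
The plan is to upgrade the bi-criteria scheme produced by Lemma~\ref{L3} into an exactly feasible one by mixing it with an interior (Slater) point; the extra perturbation costs only $O(\epsilon)$ in Sender's utility once we control the Slater slack. Suppose for the moment that we hold a finite-support scheme $\Sigma_0$ and an explicit rational $\delta>0$ with $E\brackets*{\Sigma_0,\f_i}\leq c_i-\delta$ for every $1\leq i\leq m$ (Assumption~\ref{A3} guarantees such a pair exists; I construct one below). Running the algorithm of Lemma~\ref{L3} with accuracy $\epsilon_1$ (fixed later) yields a Bayes-plausible $\Sigma_1$ with $E\brackets*{\Sigma_1,u_s}\geq OPT-\epsilon_1$ and $E\brackets*{\Sigma_1,\f_i}\leq c_i+\epsilon_1$ for all $i$, where $OPT$ is the optimal valid Sender's utility (well defined and finite, since Assumption~\ref{A3} makes the valid set nonempty). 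I would output $\Sigma^{\star}:=\parentheses*{1-\lambda}\Sigma_1+\lambda\Sigma_0$ with $\lambda:=\epsilon_1/\parentheses*{\epsilon_1+\delta}$. A convex combination of Bayes-plausible measures is Bayes-plausible and $E\brackets*{\cdot,g}$ is linear in the measure, so $\Sigma^{\star}$ is a signaling scheme; for each $i$, $E\brackets*{\Sigma^{\star},\f_i}\leq\parentheses*{1-\lambda}\parentheses*{c_i+\epsilon_1}+\lambda\parentheses*{c_i-\delta}=c_i$, hence $\Sigma^{\star}$ is valid; and since $u_s\geq 0$, $E\brackets*{\Sigma^{\star},u_s}\geq\parentheses*{1-\lambda}\parentheses*{OPT-\epsilon_1}\geq OPT-\epsilon_1-\lambda\cdot OPT$.

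Next I would choose $\epsilon_1=\Theta\parentheses*{\epsilon}$ so that $\epsilon_1+\lambda\cdot OPT\leq\epsilon$. For this I need a computable constant upper bound $\hat U\geq OPT$: by Assumption~\ref{A2} with $\epsilon=1$ one computes an explicit piecewise constant $u_{1,M_0}\geq u_s$ with finitely many pieces, so $\hat U:=\max_q u_{1,M_0}\parentheses*{q}$ is the maximum of finitely many explicit constants and $\hat U\geq\sup_q u_s\parentheses*{q}\geq OPT$. Taking $\epsilon_1:=\min\braces*{\epsilon/2,\ \epsilon\delta/\parentheses*{2\parentheses*{\delta+\hat U}}}$ and using $\lambda\leq\epsilon_1/\delta$ gives $\epsilon_1+\lambda\cdot OPT\leq\epsilon_1\parentheses*{1+\hat U/\delta}\leq\epsilon$, so $\Sigma^{\star}$ is additively $\epsilon$-optimal and valid, as required.

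It remains to construct $\parentheses*{\Sigma_0,\delta}$ with $\delta$ a positive constant, and the idea is a geometric search that again invokes Lemma~\ref{L3}. Let $\gamma^{\star}:=\sup_{\Sigma}\min_i\parentheses*{c_i-E\brackets*{\Sigma,\f_i}}$ over Bayes-plausible $\Sigma$, which is a positive constant by Assumption~\ref{A3} and finite because the $\f_i$ are continuous on the compact $\Delta\parentheses*{\Omega}$. For $j=0,1,2,\dots$, run the algorithm of Lemma~\ref{L3} with trivial objective $u_s\equiv 0$ (which trivially satisfies Assumption~\ref{A2}), with the $m$ constraints replaced by $E\brackets*{\Sigma,\f_i}\leq c_i-2^{-j}$, and with accuracy $2^{-j-1}$. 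Whenever the internal finite LP is feasible, its output satisfies $E\brackets*{\cdot,\f_i}\leq c_i-2^{-j}+2^{-j-1}=c_i-2^{-j-1}$ for every $i$, i.e.\ it is a valid scheme with genuine slack $\delta:=2^{-j-1}>0$, and we stop. This loop terminates: once $2^{-j}<\gamma^{\star}$ there is a Bayes-plausible scheme strictly satisfying the $2^{-j}$-shrunk constraints, and then -- exactly as in the proof of Lemma~\ref{L3}, by moving its mass onto the grid vertices -- the finite LP is feasible; so the loop halts after $O\parentheses*{\log\parentheses*{1/\gamma^{\star}}}$ iterations with $\delta=\Omega\parentheses*{\gamma^{\star}}$, a positive constant.

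For the running time: the search for $\parentheses*{\Sigma_0,\delta}$ makes $O\parentheses*{\log\parentheses*{1/\gamma^{\star}}}$ calls to Lemma~\ref{L3} whose accuracy parameters are bounded below by $\Omega\parentheses*{\gamma^{\star}}$, costing $\poly\parentheses*{m}$ in total; computing $\hat U$ costs $\poly\parentheses*{m}$; and since $\delta$ and $\hat U$ are positive constants, $\epsilon_1=\Theta\parentheses*{\epsilon}$, so the single call to Lemma~\ref{L3} with accuracy $\epsilon_1$ costs $\poly\parentheses*{m,t\parentheses*{\poly\parentheses*{1/\epsilon}}}$; forming $\Sigma^{\star}$ is immediate, and if the support bound of Fact~\ref{F1} is wanted one can finish by re-solving a finite LP over $\supp\parentheses*{\Sigma^{\star}}$ to extract a basic feasible solution. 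I expect the main obstacle to be the construction of $\parentheses*{\Sigma_0,\delta}$: the Slater slack $\gamma^{\star}$ is not known a priori, so it must be discovered by the geometric search, and one must verify both that this search terminates and that the discovered $\delta$ (and the bound $\hat U$) are genuine computable constants, so that the accuracy fed to Lemma~\ref{L3} degrades by only a constant factor relative to~$\epsilon$.
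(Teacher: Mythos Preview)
Your proposal is correct and follows a genuinely different route from the paper's own proof. The paper does \emph{not} output a mixture: it simply returns the output of Lemma~\ref{L3} applied with accuracy $0.5\epsilon$ to the problem with every constraint tightened by $0.5\epsilon$; since Lemma~\ref{L3}'s internal LP relaxes the constraints by the accuracy parameter, the two shifts cancel and the returned scheme is exactly valid. The convex combination in the paper appears only in the \emph{analysis}: it blends a near-optimal $\Sigma'$ (from Lemma~\ref{L3} on the original problem at accuracy $\Theta(\epsilon^3)$) with a strictly feasible $\Sigma''$ (from Lemma~\ref{L3} on the $\epsilon$-tightened problem), observes that this blend lives on the same grid and satisfies the original constraints, and uses it as a witness to lower-bound the utility of the returned $\Sigma$. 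By contrast, you output the mixture itself, combining a bi-criteria $\Sigma_1$ with an explicitly constructed Slater point $\Sigma_0$; this is the standard ``mix with an interior point'' trick and is arguably cleaner conceptually. Your approach buys constructiveness: you actually \emph{discover} a Slater slack $\delta$ via the geometric search, whereas the paper's proof (see its footnote) simply assumes an upper bound on admissible $\epsilon$ is known in advance. The paper's approach buys a slightly simpler output (a single Lemma~\ref{L3} call rather than a mixture) and a support automatically bounded by $k+m$ per Remark~\ref{R2}, which you recover only via the optional final re-solve. Both treat the Slater slack and the bound $\hat U$ as instance constants absorbed into the $\poly(m)$ term, so the asymptotic claims match.
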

     The algorithm applies Lemma~\ref{L3} to a persuasion problem with strengthened ex ante constraints. The analysis compares the output to a convex combination of two outputs of Lemma~\ref{L3} -- one might violate the ex ante constraints and the other satisfies them with strict inequality. We use the proof of Lemma~\ref{L3} to bound the utility loss. 
      \begin{proof}[Proof of Lemma~\ref{L4}]
     $u_s$ is upper semi-continuous and defined on a compact set; thus, it is bounded from above by some constant $C$; assume w.l.o.g.~that $C>2$. Let $OPT$ be Sender's optimal utility for a valid scheme. Restrict ourselves to small enough values of $0<\epsilon<\frac{2}{C}$ s.t.~strengthening each ex ante constraint by $\epsilon$ leaves the set of valid signaling schemes nonempty (it is possible by Assumption~\ref{A3}).\footnote{To be precise, we assume that an upper bound on such values of $\epsilon$ is known in advance.} We return the signaling scheme $\Sigma$ outputted by the algorithm from Lemma~\ref{L3} on $0.5\epsilon$ and the problem obtained by strengthening each ex ante constraint by $0.5\epsilon$. Then $\Sigma$ satisfies the original constraints; it remains to bound its utility loss compared to $OPT$.
     
     Let $\Sigma'$ be the output of Lemma~\ref{L3} on $0.125\epsilon^3$ and the original problem; denote by $\Sigma''$ the output of Lemma~\ref{L3} on $0.5\epsilon$ and the problem obtained by straightening each original ex ante constraint by $\epsilon$.
     
     Let $M$ be the maximal Lipschitz constant among the $g_i$s from Assumption~\ref{A4}. Then $M$ is not affected by adding constant factors to the constraints; furthermore, note that by Assumption~\ref{A2}, $u_{0.125\epsilon^3,M}$ can also serve as $u_{0.5\epsilon,M}$ (since $0.125\epsilon^3<0.5\epsilon$ and $\frac{1}{0.125\epsilon^3}=\poly\parentheses*{\frac{1}{0.5\epsilon}}$). Therefore, by the proof of Lemma~\ref{L3}, we can assume w.l.o.g.~that $\Sigma,\Sigma',\Sigma''$ are all supported on the vertices of the pieces of $u_{0.125\epsilon^3,M}$; furthermore, $\frac{1}{1+0.25\epsilon^2}\Sigma'+\frac{0.25\epsilon^2}{1+0.25\epsilon^2}\Sigma''$ satisfies each original ex ante constraint. Note that $\Sigma$ is $0.5\epsilon$-additively-optimal among the schemes supported on the above extreme points and satisfying the original ex ante constraints, since $\Sigma$ is exactly optimal among such schemes if we replace $u_s$ with $u_{0.5\epsilon,M}$, by Lemma~\ref{L3} proof. Thus:
     \begin{align*}
         &E\brackets*{\Sigma,u_s}\geq E\brackets*{\frac{1}{1+0.25\epsilon^2}\Sigma'+\frac{0.25\epsilon^2}{1+0.25\epsilon^2}\Sigma'',u_s}-\frac{\epsilon}{2}=\frac{E\brackets*{\Sigma',u_s}}{1+0.25\epsilon^2}+\frac{0.25\epsilon^2 E\brackets*{\Sigma'',u_s}}{1+0.25\epsilon^2}-\frac{\epsilon}{2}\geq\\
         &\frac{OPT-0.125\epsilon^3}{1+0.25\epsilon^2}-\frac{\epsilon}{2}\geq OPT-\epsilon,
     \end{align*}
     where the last transition follows from $\frac{\epsilon}{2}<\frac{1}{C}\leq\frac{1}{OPT}$.
    \end{proof}

\section{Ex Ante vs.~Ex Post Constraints}
\label{sec:vs}
In this section, we bound the multiplicative gap in the Sender's optimal utility between ex ante constraints and the corresponding ex post constraints; we apply our bound to signaling in ad auctions in Subsection~\ref{sub:app}. 

In full generality, the gap can be arbitrarily large even for $k=2$ states of nature and $m=1$ convex constraints:

\begin{example}
\label{E1}
Fix $\epsilon\in\parentheses*{0,\frac{1}{2}}$; take $\Omega=\braces*{0,1}$ with a uniform prior; define $f\parentheses*{p_{\sigma}}:=p_{\sigma}\brackets*{\omega=1}$ and $c:=\frac{1}{2}+\epsilon$. Let $u_s\parentheses*{p_{\sigma}}$ be $0$ if
$p_{\sigma}\brackets*{\omega=1}\in\brackets*{0,\frac{1}{2}}$ and $2\cdot p_{\sigma}\brackets*{\omega=1}-1$ otherwise. The ex ante constraint specified by $f$ and $c$ allows full revelation, which yields expected utility of $\frac{1}{2}$ for Sender.

Convexity of $u_s$ implies that under the corresponding ex post constraint, there exists an optimal signaling scheme for which always $p_{\sigma}\brackets*{\omega=1}\in\braces*{0,c}$; straightforward calculations show that the Sender's optimal utility is $\frac{2\epsilon}{1+2\epsilon}$. Thus, the multiplicative gap tends to $\infty$ as $\epsilon$ tends to $0$.
\end{example}

We identify a multiplicatively-relaxed Jensen assumption on $u_s$ parameterized by $M\geq 1$, which combined with convexity of the $m$ constraints yields a multiplicative bound of $M^m$ on the gap between ex ante and ex post constraints.

\begin{assumption}[Parameterized by $M\geq 1$]
\label{A1}
For every $\lambda\in [0,1]$ and $p_{\sigma_1},p_{\sigma_2}\in\Delta\parentheses*{\Omega}$:
\begin{align*}
\lambda u_s\parentheses*{p_{\sigma_1}} +
(1-\lambda) u_s\parentheses*{p_{\sigma_2}}\leq M\cdot u_s\parentheses*{\lambda p_{\sigma_1} +(1-\lambda)p_{\sigma_2}}.
\end{align*}
\end{assumption}

For example, in Appendix~\ref{app:auctions}, we show that Assumption~\ref{A1} holds with $M=2$ for both the welfare and the revenue utility functions in the single-item, second-price auction setting. 
We note that there are utilities $u_s$ for which the assumption does not hold for any finite $M$: those $u_s$ that ``grow too slowly'' near $0$ (in particular, if $u_s$ maps a nonzero measure of the domain to $0$, as in Example~\ref{E1}).

 \begin{theorem}[A bound on the multiplicative gap between ex ante and ex post constraints]
    \label{T5}
    Suppose that $u_s$ satisfies Assumption~\ref{A1} with parameter $M\ge1$. Fix $m$ convex ex ante constraints and let $\Sigma_{\text{ex ante}}$ be a valid signaling scheme. Then there exists $\Sigma_{\text{ex post}}$, a valid signaling scheme under the corresponding $m$ ex post constraints, s.t.:\\
    $E\brackets*{\Sigma_{\text{ex post}}, u_s}\geq \frac{1}{M^m}\cdot E\brackets*{\Sigma_{\text{ex ante}}, u_s}.$
    \end{theorem}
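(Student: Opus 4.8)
The plan is to transform $\Sigma_{\text{ex ante}}$ into $\Sigma_{\text{ex post}}$ by processing the $m$ convex constraints one at a time, each time ``merging'' the parts of the support that violate the ex post version of that constraint into a single consolidated posterior, at the cost of a multiplicative factor of $M$ in Sender's utility. After all $m$ steps are done, the resulting scheme satisfies every ex post constraint and has lost at most a factor $M^m$, which is exactly the claimed bound. So it suffices to prove the one-constraint statement: given a valid signaling scheme under a single convex ex ante constraint specified by $f$ and $c$, there is a signaling scheme satisfying the corresponding ex post constraint whose expected Sender utility is at least $\frac{1}{M}$ times the original; then iterate, noting that merging for one constraint cannot worsen satisfaction of the ex post constraints already enforced in previous rounds (a point I will need to check — see below).

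For the single-constraint step, let $\Sigma$ be valid under the ex ante constraint $E[\Sigma,f]\le c$ with $f$ convex. Partition $\supp(\Sigma)$ into the ``good'' set $G=\{p_\sigma : f(p_\sigma)\le c\}$ and the ``bad'' set $B=\{p_\sigma : f(p_\sigma)>c\}$. On $G$ the ex post constraint already holds; the idea is to replace the entire bad part by the single posterior $\bar q := \mathbb{E}[p_\sigma \mid p_\sigma\in B]$ (the conditional barycenter), keeping the signal probability equal to $\Pr_\Sigma[p_\sigma\in B]$. This preserves Bayes-plausibility since the overall expectation of the posterior is unchanged. By Jensen's inequality and convexity of $f$, $f(\bar q)\le \mathbb{E}[f(p_\sigma)\mid p_\sigma\in B]$; I must argue this conditional expectation is $\le c$. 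It need not be in general, but here is where I use that $\Sigma$ was valid ex ante: $c \ge E[\Sigma,f] = \Pr[G]\,\mathbb{E}[f\mid G] + \Pr[B]\,\mathbb{E}[f\mid B]$. Hmm — this only bounds a weighted average, so $\mathbb{E}[f\mid B]$ could still exceed $c$. I think the correct merging is more subtle: rather than lumping all of $B$, one should merge each bad posterior with just enough good mass so that the barycenter lands exactly on the boundary $\{f=c\}$ (possible because the ex ante budget $c - E[\Sigma,f\mathbf{1}_G\cdots]$... ) — more cleanly, since $E[\Sigma,f]\le c$, the ``surplus'' $\sum_{B}$ of $f$ above $c$ is covered by the ``deficit'' $\sum_{G}$ of $f$ below $c$, so one can pair up bad posteriors with good ones and form convex combinations each satisfying $f=c$ by an intermediate-value/continuity argument. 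Each such merge, by Assumption~\ref{A1}, loses at most a factor $M$: $u_s(\lambda p_{\sigma_1}+(1-\lambda)p_{\sigma_2}) \ge \frac{1}{M}(\lambda u_s(p_{\sigma_1})+(1-\lambda)u_s(p_{\sigma_2}))$, so the contribution of the merged pair to $E[\cdot,u_s]$ drops by at most a factor $M$; summing over all merges gives $E[\Sigma_{\text{new}},u_s]\ge \frac{1}{M}E[\Sigma,u_s]$.

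I expect the main obstacle to be making the ``pair up surplus with deficit and push each convex combination onto the boundary $\{f=c\}$'' construction fully rigorous while simultaneously controlling the utility loss and keeping it compatible with iteration. Concretely: (i) one must handle the continuum case — $\Sigma$ need not have finite support — which may require a measurable selection or a limiting argument, or alternatively one can first invoke Fact~\ref{F1} to reduce to a finitely-supported $\Sigma_{\text{ex ante}}$ of support size $\le k+m$, making everything a finite recombination; (ii) for the iteration, after round $j$ the scheme satisfies ex post constraints $1,\dots,j$, and when we merge in round $j+1$ we must verify the merge does not push any posterior back outside the region $K_j=\bigcap_{i\le j}\{f_i\le c_i\}$ — this holds because each $K_j$ is convex (intersection of convex sublevel sets, as the $f_i$ are convex) and the new posteriors are convex combinations of points already in $K_j$, hence stay in $K_j$; and (iii) the utility-loss factors multiply rather than add, giving the clean $M^m$. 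Once the finite reduction is in place, step (ii)'s convexity observation is the linchpin that makes the one-constraint-at-a-time scheme work, and the rest is bookkeeping with Jensen and Assumption~\ref{A1}.
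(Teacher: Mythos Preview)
Your proposal is correct and takes essentially the same route as the paper: reduce to finite support via Fact~\ref{F1}, then handle the constraints one at a time by pairing each violating posterior with a strictly-satisfying one and pushing the convex combination onto the boundary $\{f=c\}$ (this is precisely the paper's Algorithm~\ref{alg:a1} and Lemma~\ref{L1}), invoking Assumption~\ref{A1} for the factor-$M$ loss per round and convexity of the sublevel sets to keep previously-enforced ex post constraints intact. Two small points worth making explicit: the merged posterior lands on $\{f=c\}$ and is therefore never re-merged within the same round, which is why the per-round loss is a single factor $M$ rather than compounding over the number of merges; and merging also preserves the \emph{yet-unprocessed} ex ante constraints (Jensen, since each $f_i$ is convex), which is what licenses applying your single-constraint step at round $j+1$.
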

    
    The proof runs Algorithm~\ref{alg:a1} for each constraint separately. This algorithm repeatedly pools a posterior violating the ex post constraint with a posterior satisfying this constraint with a strict inequality, replacing one of them by a posterior on which the ex post constraint is tight and decreasing the probability mass assigned to the other posterior. This process stops since each iteration decreases the number of posteriors in $\supp\parentheses*{\Sigma}$ on which the ex post constraint is not tight. The constraint convexity assures that the resultant scheme satisfies the ex post constraint; Assumption~\ref{A1} implies that the multiplicative loss caused by the pooling process (for each constraint) is at most $M$. Formally, we start with the following lemma.
\begin{algorithm}[tb]
\caption{Ex ante to ex post}
\label{alg:a1}
\textbf{Input}: A signaling scheme $\Sigma$ with a finite support satisfying: $E\brackets*{\Sigma, f}\leq c$.\\
\textbf{Parameters}: A continuous convex function $f:\Delta\parentheses*{\Omega}\to\mathbb{R}$, a constant $c$.\\
\textbf{Output}: An updated signaling scheme $\Sigma$ with a multiplicative expected utility loss of at most $M$ compared to the input s.t.~$\forall\p_{\sigma}\in\supp\parentheses*{\Sigma}:\;f\parentheses*{\p_{\sigma}}\leq c$.
\begin{algorithmic}[1]
\STATE $S\gets \supp\parentheses*{\Sigma}\cap f^{-1}\parentheses*{\parentheses*{-\infty,c}}$.
\STATE $T\gets \supp\parentheses*{\Sigma}\cap f^{-1}\parentheses*{\parentheses*{c,\infty}}$.
\WHILE{$S,T\neq \emptyset$}
\STATE Take $q_S\in S, q_T\in T$.
\STATE $r_S\gets\Pr_{p_{\sigma}\sim\Sigma}\brackets*{p_{\sigma}=q_S}$, $r_T\gets\Pr_{p_{\sigma}\sim\Sigma}\brackets*{p_{\sigma}=q_T}$.
\STATE Find $\lambda\in \parentheses*{0,1}$ s.t.~$f\parentheses*{\lambda q_S+\parentheses*{1-\lambda} q_T}=c$.
\STATE Define $q_c := \lambda q_S + \parentheses*{1-\lambda} q_T$.
\STATE $\supp\parentheses*{\Sigma}\gets\supp\parentheses*{\Sigma}\cup\braces*{q_c}$.
\IF {$\lambda r_T \geq \parentheses*{1-\lambda} r_S$}
\STATE $\supp\parentheses*{\Sigma}\gets\supp\parentheses*{\Sigma}\setminus\braces*{q_S}$.
\STATE $r_S\gets 0$, $r_T\gets r_T-\frac{\parentheses*{1-\lambda}r_S}{\lambda}, r_c\gets\frac{r_S}{\lambda}$.
\ELSE
\STATE $\supp\parentheses*{\Sigma}\gets\supp\parentheses*{\Sigma}\setminus\braces*{q_T}$.
\STATE $r_S\gets r_S-\frac{\lambda r_T}{1-\lambda}$, $r_T\gets 0, r_c\gets\frac{r_T}{1-\lambda}$.
\ENDIF
\STATE Update $\Sigma$ according to $r_S$, $r_T$, $r_c$.
\ENDWHILE
\STATE \textbf{return} $\Sigma$.
\end{algorithmic}
\end{algorithm}
\begin{lemma}
\label{L1}
Suppose that $u_s$ satisfies Assumption~\ref{A1} with some $M\geq 1$. Let $\Sigma_{\text{ex ante}}$ be a signaling scheme with a finite support satisfying a convex ex ante constraint specified by some $f$ and $c$. Then the output of Algorithm~\ref{alg:a1} on $\Sigma_{\text{ex ante}}$ is a signaling scheme $\Sigma_{\text{ex post}}$ satisfying the corresponding ex post constraint, s.t.: $E\brackets*{\Sigma_{\text{ex post}}, u_s}\geq \frac{1}{M}\cdot E\brackets*{\Sigma_{\text{ex ante}}, u_s}$.
\end{lemma}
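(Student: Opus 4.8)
The plan is to check, in order, that Algorithm~\ref{alg:a1} is well defined, that it terminates, that its output satisfies the corresponding ex post constraint, and finally that it loses at most a factor $M$ in Sender's expected utility; only the last point requires a genuinely global argument.

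\emph{Well-definedness, Bayes-plausibility, and preservation of the ex ante constraint.} In each iteration $q_S\in S$ and $q_T\in T$, so $f(q_S)<c<f(q_T)$; continuity of $f$ and the intermediate value theorem then give $\lambda\in(0,1)$ with $f(q_c)=c$ for $q_c:=\lambda q_S+(1-\lambda)q_T$. A direct check of the two branches shows that the updated probabilities stay nonnegative, sum to $1$, and that the mass removed from $q_S$ and $q_T$ is placed on $q_c$ with weights exactly proportional to $\lambda$ and $1-\lambda$; writing $a$ and $b$ for the masses drained from $q_S$ and $q_T$ we have $\lambda=a/(a+b)$ and the mass placed on $q_c$ equals $a+b$, so $a\,q_S+b\,q_T=(a+b)q_c$ and Bayes-plausibility is preserved. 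Moreover this step changes the contribution to $E[\Sigma,f]$ from $a\,f(q_S)+b\,f(q_T)$ to $(a+b)f(q_c)=(a+b)c$, and convexity of $f$ gives $(a+b)c=(a+b)f(\lambda q_S+(1-\lambda)q_T)\le(a+b)\bigl(\lambda f(q_S)+(1-\lambda)f(q_T)\bigr)=a\,f(q_S)+b\,f(q_T)$, so the constraint $E[\Sigma,f]\le c$ is maintained throughout.

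\emph{Termination and ex post feasibility.} Since $f(q_c)=c$, every newly created posterior lies on the hyperplane $\{f=c\}$ and hence is never chosen as $q_S$ or $q_T$ in a later iteration; as each iteration deletes at least one of $q_S,q_T$ from $\supp(\Sigma)$ and adds only $q_c$, the number $\absolute*{\supp(\Sigma)\cap f^{-1}(\mathbb{R}\setminus\{c\})}$ strictly decreases, so — using that the input support is finite — the loop stops after finitely many iterations. At termination $S=\emptyset$ or $T=\emptyset$. If $T\neq\emptyset$ then $S=\emptyset$, so every posterior in the support has $f$-value $\ge c$ and some positive-mass posterior has $f$-value $>c$, forcing $E[\Sigma,f]>c$ and contradicting the previous paragraph. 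Hence $T=\emptyset$, i.e.\ $f(p_\sigma)\le c$ for all $p_\sigma\in\supp(\Sigma)$, so the output satisfies the ex post constraint.

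\emph{The utility bound (main obstacle).} The naive per-iteration estimate only gives a factor $M^{-1}$ per pooling step, i.e.\ an exponential loss, so a global accounting is needed. The key structural fact is that, by the hyperplane observation above, every $q_S$ (resp.\ $q_T$) ever selected is an \emph{original} posterior of $\Sigma_{\text{ex ante}}$. Partition $\supp(\Sigma_{\text{ex ante}})$ into $P_<,P_=,P_>$ by the sign of $f-c$, with masses $w_p$. The $P_=$-posteriors are never touched; each $P_<$-posterior $p$ keeps a residual mass $w_p-A_p\ge 0$, where $A_p\ge0$ is the total mass drained from it; and each $P_>$-posterior is fully consumed, since $T=\emptyset$ at the end. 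From Assumption~\ref{A1}, each iteration satisfies $r_c\,u_s(q_c)\ge \tfrac1M\bigl(a\,u_s(q_S)+b\,u_s(q_T)\bigr)$; summing over all iterations, the total utility placed on the $q_c$-points is at least $\tfrac1M\bigl(\sum_{p\in P_<}A_p\,u_s(p)+\sum_{p\in P_>}w_p\,u_s(p)\bigr)$. Adding the untouched $P_=$-mass and the residual $P_<$-mass, and using $u_s\ge 0$ together with $M\ge1$ — so that $(w_p-A_p)+\tfrac1M A_p\ge\tfrac1M w_p$ (equivalently $w_p\ge A_p$) and $\sum_{P_=}w_p u_s(p)\ge\tfrac1M\sum_{P_=}w_p u_s(p)$ — yields $E[\Sigma_{\text{ex post}},u_s]\ge\tfrac1M E[\Sigma_{\text{ex ante}},u_s]$. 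I expect the bookkeeping in this last step (tracking which posteriors are original, how much mass is drained from each, and that the charging is consistent across iterations) to be the only delicate part; everything else is routine verification.
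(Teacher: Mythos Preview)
Your proof is correct and follows the same outline as the paper's: termination via the observation that newly created posteriors lie on $\{f=c\}$ and are never re-selected (equivalently, $S\cup T\subseteq\supp(\Sigma_{\text{ex ante}})$ throughout), Bayes-plausibility and preservation of $E[\Sigma,f]\le c$ by direct computation plus convexity of $f$, and ex post feasibility because $T=\emptyset$ at termination. Your global accounting for the $1/M$ utility bound is considerably more detailed than the paper's (which asserts it in a single sentence), but it is exactly the natural way to justify that assertion and hinges on the same structural fact---only original posteriors are ever pooled---already used in the termination argument.
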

Assuming Lemma~\ref{L1}, let us prove Theorem~\ref{T5}.
\begin{proof}[Proof of Theorem~\ref{T5}]
By Fact~\ref{F1}, assume w.l.o.g.~that $\Sigma_{\text{ex ante}}$ has a finite support. Let us run Algorithm~\ref{alg:a1} for $j=1,2,...,m$ on $\Sigma_{\text{ex ante}}$, $f_j$ and $c_j$ (updating the signaling scheme repeatedly) and let $\Sigma_{\text{ex post}}$ be the final output. Applying Lemma~\ref{L1} $m$ times, together with the convexity of the constraints -- which ensures that pooling cannot increase the expected value of any constraint function -- implies that $\Sigma_{\text{ex post}}$ satisfies the theorem requirements.
\end{proof}
It remains to prove the lemma.
\begin{proof}[Proof of Lemma~\ref{L1}]
Note that Algorithm~\ref{alg:a1} terminates after at most $\absolute*{\supp\parentheses*{\Sigma_{\text{ex ante}}}}-1$ iterations, since each iteration decreases $\absolute*{S\cup T}$, and throughout the algorithm run we have: $S\cup T\subseteq \supp\parentheses*{\Sigma_{\text{ex ante}}}$. The update rules ensure that $r_S$, $r_T$ and $r_c$ after each update are nonnegative and their sum equals $r_S+r_T$ before the update; furthermore, these updates preserve Bayes-plausibility, as $q_c=\lambda q_S + \parentheses*{1-\lambda} q_T$ and:
\begin{align*}
&0\cdot q_S + \parentheses*{r_T-\frac{\parentheses*{1-\lambda}r_S}{\lambda}} \cdot q_T +\parentheses*{\frac{r_S}{\lambda}} \cdot \parentheses*{\lambda q_S + \parentheses*{1-\lambda} q_T}=r_S\cdot q_S + r_T\cdot q_T,
\end{align*}
and also:
\begin{align*}
&\parentheses*{r_S-\frac{\lambda r_T}{1-\lambda}}\cdot q_S+ 0\cdot q_T +\frac{r_T}{1-\lambda} \cdot \parentheses*{\lambda q_S + \parentheses*{1-\lambda} q_T}=r_S\cdot q_S + r_T\cdot q_T.
\end{align*}
Therefore, $\Sigma$ remains a Bayes-plausible probability distribution throughout the algorithm run. In addition, the convexity of $f$ implies that the expectation of $f$ never increases. Hence, when the algorithm stops -- we must have $T=\emptyset$. Thus, $\Sigma_{\text{ex post}}$ satisfies the ex post constraint specified by $f$ and $c$. Moreover, Assumption~\ref{A1} implies that the multiplicative loss in the expected Sender's utility compared to $\Sigma_{\text{ex ante}}$ is at most $M$.
\end{proof}
    In Appendix~\ref{app:vs}, we prove the following facts on tightness of Theorem~\ref{T5} and our analysis. We leave as an open question the tightness of Theorem~\ref{T5} for general $m$.
    \begin{proposition}
    \label{P1}
    \begin{enumerate}
    \item Our analysis is tight for any $m$ and $M=2$.\footnote{Note that we use $M=2$ in our applications.}
    \item The bound from Theorem~\ref{T5} on the multiplicative gap between ex ante and ex post constraints is tight for $m=1$ and any $M$.
    \item This gap grows with $m$ and can be at least $m+1$.
    \end{enumerate}
    \end{proposition}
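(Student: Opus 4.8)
The plan is to establish all three parts by explicit constructions built from two gadget utilities. For parts~1 and~3 the gadget is $u_s(p_\sigma):=\|p_\sigma\|_\infty$, the largest coordinate of the posterior; it satisfies Assumption~\ref{A1} with $M=2$, since writing $p:=\lambda p_{\sigma_1}+(1-\lambda)p_{\sigma_2}$ and letting $i_t$ maximize $p_{\sigma_t}$ we have $2\|p\|_\infty\ge p(i_1)+p(i_2)\ge\lambda p_{\sigma_1}(i_1)+(1-\lambda)p_{\sigma_2}(i_2)=\lambda\|p_{\sigma_1}\|_\infty+(1-\lambda)\|p_{\sigma_2}\|_\infty$ after discarding nonnegative terms. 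For part~2 the gadget is the shifted hockey stick $u_s(t):=1+4(M-1)\max\{0,t-\tfrac12\}$ on $\Delta(\{0,1\})$, where $t:=p_\sigma[\omega=1]$; it is convex, and since for a convex $u_s$ the worst two-point combination in Assumption~\ref{A1} is the one supported on the endpoints of the domain, its Assumption~\ref{A1}-modulus equals $\sup_{t_0\in[0,1]}\frac{1+(2M-2)t_0}{u_s(t_0)}$, a quantity an elementary computation shows equals $M$ (attained at $t_0=\tfrac12$, where the secant has value $M$ and $u_s(\tfrac12)=1$).

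For part~1, take $\Omega=\{0,1\}^m$ with the uniform prior and, for each $j$, the linear (hence convex) ex ante constraint $f_j(p_\sigma):=\Pr_{p_\sigma}[\text{$j$-th bit}=1]\le\tfrac12$; these are trivial ex ante because Bayes-plausibility forces $E[\Sigma,f_j]=\tfrac12$ for every $\Sigma$, so $\Sigma_{\text{ex ante}}$ may be taken to be full revelation, with $E[\Sigma_{\text{ex ante}},u_s]=1$. Now run Algorithm~\ref{alg:a1} for $j=1,\dots,m$ exactly as in the proof of Theorem~\ref{T5}, at stage $j$ always pooling atoms in pairs that differ only in whether the $j$-th bit equals $0$ or $1$. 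By induction the support at the start of stage $j$ consists of the uniform distributions over the sub-cubes in which bits $1,\dots,j-1$ are free; the two atoms of each pool then carry equal mass and give $\lambda=\tfrac12$, so they are replaced by the uniform distribution over the sub-cube in which bits $1,\dots,j$ are free, whence after stage $j$ one has $E[\Sigma_j,u_s]=2^{-j}$ (each atom now has $\ell_\infty$-norm $2^{-j}$), and feasibility for the already-imposed ex post constraints holds since each free bit has marginal $\tfrac12$. After stage $m$ the scheme is no revelation, with utility $2^{-m}=\tfrac{1}{M^m}\,E[\Sigma_{\text{ex ante}},u_s]$, so the $M^m$ loss analyzed in the proof of Theorem~\ref{T5} is attained.

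For part~2, use $\Omega=\{0,1\}$ with the uniform prior, the gadget $u_s$ above, and the single convex ex ante constraint $f(t)=t\le\tfrac12$ (again trivial ex ante). Then $E[\Sigma_{\text{ex ante}},u_s]=\tfrac12 u_s(0)+\tfrac12 u_s(1)=M$ for $\Sigma_{\text{ex ante}}$ full revelation, while under the corresponding ex post constraint every posterior has $t\le\tfrac12$, and Bayes-plausibility $E[t]=\tfrac12$ forces $t=\tfrac12$ almost surely; thus the unique valid ex post scheme is no revelation, with utility $u_s(\tfrac12)=1=\tfrac1M E[\Sigma_{\text{ex ante}},u_s]$, so Theorem~\ref{T5}'s bound is attained for $m=1$. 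For part~3, use $\Omega=\{1,\dots,m+1\}$ with the uniform prior, $u_s=\|\cdot\|_\infty$, and the convex constraints $f_j(p_\sigma)=p_\sigma(j)\le\tfrac1{m+1}$ for $j=1,\dots,m$ (trivial ex ante); full revelation then achieves ex ante optimum $1$ (and $1$ is the unconstrained optimum since $\|\cdot\|_\infty\le1$). For the ex post side, any valid posterior $q$ has $q(m{+}1)=1-\sum_{j\le m}q(j)\ge\tfrac1{m+1}$, so Bayes-plausibility $E[q(m{+}1)]=\tfrac1{m+1}$ forces $q(m{+}1)=\tfrac1{m+1}$ almost surely, which together with $q(j)\le\tfrac1{m+1}$ gives $\|q\|_\infty=\tfrac1{m+1}$; hence the ex post optimum is $\tfrac1{m+1}$ and the gap equals $m+1$.

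The hard part is part~1: making the analysis incur the full $M^m$ loss while keeping $M=2$. A $u_s$ assembled as a sum of one-coordinate gadgets has two-point modulus $2$ but loses only a constant factor overall, whereas a product of them loses $2^m$ but has two-point modulus $2^m$ — so neither works. The point of the $\ell_\infty$ gadget is that it has two-point modulus exactly $2$ yet behaves multiplicatively along the particular nested sequence of equal-weight poolings that Algorithm~\ref{alg:a1} — whose choices of $q_S,q_T$ are unconstrained — can be steered through, each stage halving the $\ell_\infty$-norm of every atom; verifying that this steering is internally consistent (the needed pool exists at every stage) and preserves feasibility for the earlier ex post constraints is where the real content lies, the rest being the bookkeeping sketched above.
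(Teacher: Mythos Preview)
Your proposal is correct and takes essentially the same approach as the paper: parts~1 and~3 use exactly the paper's constructions ($u_s=\|\cdot\|_\infty$ on $\{0,1\}^m$ and on $\{1,\dots,m{+}1\}$, with the same linear constraints and the same bit-by-bit pooling in part~1). The only difference is in part~2, where the paper uses the symmetric V-shape $u_s(t)=\tfrac{1}{M}+|t-\tfrac12|\cdot\tfrac{2(M-1)}{M}$ (values $1$ at the endpoints, $\tfrac{1}{M}$ at $t=\tfrac12$) while you use the asymmetric hockey stick $u_s(t)=1+4(M-1)\max\{0,t-\tfrac12\}$; both are convex with Assumption~\ref{A1}-modulus exactly $M$ and both yield the same ex ante/ex post gap of $M$ via the same ``$f(t)=t\le\tfrac12$ forces no revelation ex post'' argument, so this is a cosmetic variation rather than a different route.
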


\subsection{Applications}
\label{sub:app}
We apply Theorem~\ref{T5} to the important domain of signaling in ad auctions. We use a generalization of the ``Bayesian Valuation Setting'' of~\citet{BBX} and add to it constraints on the signaling scheme.

Consider a single-item second-price auction with $n$ bidders. Recall from Section~\ref{sec:model} that the item being sold is the opportunity to show an online advertisement to a web user, whose characteristics are known to the auctioneer, but not to the bidders. Each bidder targets a certain set of users to whom showing her ad would be most valuable and the auctioneer signals information about which targeted sets the user belongs to.

In the language of persuasion, Sender is the auctioneer while Receiver is the set of bidders. Take $\Omega:=\braces*{0,1}^n$, where the $i$-th coordinate specifies whether the web user is in the $i$-th advertiser's targeted set; denote by $\omega=\parentheses*{\omega_1,...,\omega_n}$ the state of nature; let $\p$ be some commonly-known prior distribution. Assume further that  for every $1\leq i\leq n$, the $i$-th bidder has a private \emph{type} $t_i$; for every $1\leq i\leq n$, the \emph{valuation} $v_i$ of the $i$-th bidder is determined by a nonnegative function $v_i\parentheses*{\omega_i, t_i}$.\footnote{Unlike~\citet{BBX}, we assume neither that the $t_i$s are i.i.d. nor that $v_1\equiv...\equiv v_n$.} Fix $2n$ continuously differentiable CDFs $\LL_i$ and $\HH_i$ ($1\leq i\leq n$) and assume that $v_i\parentheses*{0,t_i}\sim_{t_i}\LL_i$ and $v_i\parentheses*{1,t_i}\sim_{t_i}\HH_i$ for every $1\leq i\leq n$. The auction runs as follows:
\begin{enumerate}
\item The auctioneer commits to a valid signaling scheme $\Sigma$, an allocation rule and a payment rule.
\item The auctioneer discovers the state of nature $\omega\in\Omega$.
\item The auctioneer broadcasts a public signal realization $\sigma$ according to $\Sigma\parentheses*{\omega}$.
\item The bidders update their expected valuations using $p_{\sigma}$ and report their bids to the auctioneer.
\item The auction outcome is determined by the allocation and the payment rules.
\end{enumerate}
Define Sender's utility $u_s\parentheses*{p_{\sigma}}$ to be the expected welfare -- the winner's value -- over $t_1,...,t_n$, for a posterior $p_{\sigma}$. Explicitly:
\begin{align*}
&u_s\parentheses*{p_{\sigma}}:=\mathbb{E}_{t_1,...,t_n}\brackets{\max\braces{p_{\sigma}\brackets*{\omega_1=0}\cdot v_1\parentheses*{0,t_1}+p_{\sigma}\brackets*{\omega_1=1}\cdot v_1\parentheses*{1,t_1},...,\\
&p_{\sigma}\brackets*{\omega_n=0}\cdot v_n\parentheses*{0,t_n}+p_{\sigma}\brackets*{\omega_n=1}\cdot v_n\parentheses*{1,t_n}}}.
\end{align*}
In Appendix~\ref{app:auctions}, we prove the following result.
\begin{proposition}
\label{P2}
$u_s$ -- the expected (over the bidders' private types) welfare in a single-item second-price auction with signaling -- satisfies Assumption~\ref{A1} with $M=2$.
\end{proposition}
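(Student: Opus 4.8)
\textbf{Proof plan for Proposition~\ref{P2}.} The plan is to verify Assumption~\ref{A1} with $M=2$ directly from the explicit formula for $u_s$. Fix $\lambda\in[0,1]$ and two posteriors $p_{\sigma_1},p_{\sigma_2}\in\Delta\parentheses*{\Omega}$, and write $q:=\lambda p_{\sigma_1}+(1-\lambda)p_{\sigma_2}$. Since $u_s$ is an expectation over the bidders' types $t_1,\dots,t_n$, it suffices to establish the inequality pointwise for each fixed type profile and then take expectations; so fix $t_1,\dots,t_n$ and write $a_i:=v_i(0,t_i)\ge0$, $b_i:=v_i(1,t_i)\ge0$. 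For a posterior $r\in\Delta\parentheses*{\Omega}$ the inner expression is $W(r):=\max_{1\le i\le n}\bigl(r[\omega_i=0]\,a_i+r[\omega_i=1]\,b_i\bigr)$. The key observation is that the argument of $W$ depends on $r$ only through the $n$ marginal probabilities $r[\omega_i=1]$, which are linear in $r$; hence for each fixed $i$ the $i$-th term $g_i(r):=r[\omega_i=0]\,a_i+r[\omega_i=1]\,b_i$ is an affine (in particular linear-under-mixtures) function of $r$, so $g_i(q)=\lambda g_i(p_{\sigma_1})+(1-\lambda)g_i(p_{\sigma_2})$.

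The next step is the elementary maximum-of-linear-functions estimate. Let $i^\star$ be an index attaining $W(p_{\sigma_1})$ and $j^\star$ one attaining $W(p_{\sigma_2})$. Then
\begin{align*}
\lambda W(p_{\sigma_1})+(1-\lambda)W(p_{\sigma_2})
&=\lambda g_{i^\star}(p_{\sigma_1})+(1-\lambda)g_{j^\star}(p_{\sigma_2})\\
&\le \lambda g_{i^\star}(p_{\sigma_1})+(1-\lambda)g_{i^\star}(p_{\sigma_2})
 +\lambda g_{j^\star}(p_{\sigma_1})+(1-\lambda)g_{j^\star}(p_{\sigma_2})\\
&= g_{i^\star}(q)+g_{j^\star}(q)\ \le\ 2\,W(q),
\end{align*}
where the first inequality uses nonnegativity of all the $g_i$ (which follows from $a_i,b_i\ge0$ and $r[\omega_i=0],r[\omega_i=1]\ge0$) to add back the two missing nonnegative terms, the middle equality is linearity of $g_{i^\star},g_{j^\star}$ under the mixture, and the last inequality is $g_{i^\star}(q),g_{j^\star}(q)\le\max_i g_i(q)=W(q)$. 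Thus $\lambda W(p_{\sigma_1})+(1-\lambda)W(p_{\sigma_2})\le 2\,W(q)$ for every type profile.

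Finally, take expectations over $t_1,\dots,t_n$ on both sides. By linearity of expectation and the definition of $u_s$ this yields
\begin{equation*}
\lambda\,u_s(p_{\sigma_1})+(1-\lambda)\,u_s(p_{\sigma_2})\ \le\ 2\,u_s\bigl(\lambda p_{\sigma_1}+(1-\lambda)p_{\sigma_2}\bigr),
\end{equation*}
which is exactly Assumption~\ref{A1} with $M=2$. I do not anticipate a genuine obstacle here: the only point requiring care is that nonnegativity of the valuations is essential — it is what lets us pad the two selected terms with the other summand — and this is guaranteed by the assumption that each $v_i$ is a nonnegative function; the rest is the standard fact that a maximum of nonnegative affine functions satisfies the multiplicatively-relaxed Jensen inequality with constant equal to the number of functions being combined, here~$2$. (One may also remark that the revenue function, being the second-highest among the $n$ nonnegative affine terms, admits the same bound by an analogous argument, matching the parenthetical claim after Assumption~\ref{A1}.)
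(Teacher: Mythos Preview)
Your proof is correct and follows essentially the same approach as the paper: reduce to a pointwise inequality for fixed types, observe that each bidder's expected value is a nonnegative linear function of the posterior, prove the relaxed Jensen inequality for the maximum of such functions, and take expectations. The only minor difference is that the paper packages the core estimate as a lemma about the $j$-th order statistic of nonnegative linear functions (proved via a WLOG symmetry and a single index rather than your two-index padding), which then also yields the revenue case directly; your argument handles $j=1$ with an equally elementary trick and correctly notes the extension to revenue.
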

This result extends to expected revenue and to sponsored search (slot) auctions -- see Appendix~\ref{app:auctions}. Proposition~\ref{P2} suggests the following ``recipe'' for solving signaling problems in ad auctions under a constant number of convex ex ante constraints: (approximately) solve the problem for the corresponding ex post constraints; this guarantees, by Theorem~\ref{T5}, a constant-factor approximation for the original problem. The next example demonstrates.
\begin{example}
\label{E2}
Take a single ex ante constraint specified by the function $(-\min\braces*{b_{\omega'} p_{\sigma}\brackets*{\omega=\omega'}}_{\omega'\in\Omega})$ with some constant weights $\braces*{b_{\omega'}}_{\omega'\in\Omega}$. As mentioned in Section~\ref{sec:model}, this constraint is a possible model for anti-discrimination. Finding the optimal valid scheme $\Sigma^*_{\text{ex ante}}$ is an open question. However, the corresponding ex post constraint is simple to handle -- it restricts the posteriors to an appropriate simplex, and since $u_s$ (the social welfare) is convex, the optimal scheme $\Sigma^*_{\text{ex post}}$ is supported precisely on the vertices of this simplex, and is uniquely specified by Bayes-plausibility. Theorem~\ref{T5}, combined with Proposition~\ref{P2}, shows that $\Sigma^*_{\text{ex post}}$ is a $\frac{1}{2}$-approximation to $\Sigma^*_{\text{ex ante}}$.
\end{example}

\section{Future Work}
\label{sec:future}
We study the setting of ex ante- and ex post-constrained persuasion, which has applications to areas including ad auctions and limited attention. A future research direction, especially considering Theorem~\ref{T5}, is studying (nearly) optimal signaling schemes under common ex post constraints, such as KL divergence. Another interesting direction is constrained persuasion with \emph{private} signaling, e.g., when Sender's utility is a function of the set of Receivers who adopt a certain action~\cite{AB}.
\bibliographystyle{plainnat}
\bibliography{main}
\appendix
\section{Alternative Proof of Fact~\ref{F1}}
\label{app:existence_alternative}
The proof is similar to that of Theorem~\ref{T1}, but we add to the infinite LP $m$ constraints corresponding to the given ex ante constraints, rather than restricting the support of $\Sigma$ to a compact subset of $\Delta\parentheses*{\Omega}$.
    \begin{proof}
    Denote the function and the constant specifying the $i$-th ex ante constraint ($1\leq i\leq m$) by $\f_i$ and $c_i$, respectively. We aim to solve:
    \begin{align*}
    \text{max}\;\;\;\;&E\brackets*{\Sigma,u_s}\\
    \text{s.t.}\;\;\;\;\;\;&\p\brackets*{\omega_0}=E\brackets*{\Sigma,\p_{\sigma}\brackets*{\omega_0}}\;\;\forall \omega_0 \in\Omega\\
    &E\brackets*{\Sigma,f_i}\leq c_i\;\;\forall 1\leq i\leq m
    \end{align*}
    The optimization problem specifies an infinite LP with $k+m$ constraints s.t.~the ``variables'' are the distribution $\Sigma$ over elements of $\Delta\parentheses*{\Omega}$. The rest of the proof is the same as for Theorem~\ref{T1} with $K=\Delta\parentheses*{\Omega}$ (using the notions of Theorem~\ref{T1} proof); the only difference is that we have $k+m$ linear constraints rather than $k$, which yields a bound of $2^{k+m}$ on $\absolute{\supp\parentheses*{\Sigma}}$ in the third step and a bound of $k+m$ in the fourth step.
   \end{proof}
\section{Proof of Proposition~\ref{P3}}
\label{app:existence}
\begin{proof}
Fix $k$, $m$ and some $\Omega$ of size $k$. We shall define $m$ ex ante constraints and an upper semi-continuous $u_s$ s.t.~any optimal signaling scheme has support size exactly $k+m$.

Let $e_1,...,e_k$ be the standard basis of $\mathbb{R}^k$. Take $q_1,...,q_m$ to be $m$ distinct interior points of $\Delta\parentheses*{\Omega}\setminus\braces*{e_1,...,e_k}$ with $\frac{q_1+...+q_m}{m}=\parentheses*{\frac{1}{k},...,\frac{1}{k}}$; set $u_s$ to be $1$ on $\braces*{q_1,...,q_m}$, $\frac{1}{2}$ on $\braces*{e_1,...,e_k}$ and $0$ on $\Delta\parentheses*{\Omega}\setminus\braces*{e_1,...,e_k, q_1,...,q_m}$; let $f_i$ ($1\leq i\leq m$) be some nonnegative continuous function, which is $1$ on $q_i$ and $0$ on the other $q_j$s; set $c_1=...=c_m=\frac{1}{2m}$; choose $p=\parentheses*{\frac{1}{k},...,\frac{1}{k}}$.

Then $u_s$ is upper semi-continuous and the $f_i$s are continuous. Furthermore, no valid signaling scheme under the ex ante constraints specified by the $f_i$s and the $c_i$s assigns probability greater than $\frac{1}{2m}$ to each $q_i$; thus, the utility of a valid scheme is at most: $m\cdot\frac{1}{2m}\cdot 1+\parentheses*{1-m\cdot\frac{1}{2m}}\cdot\frac{1}{2}=\frac{3}{4}$.

A utility of exactly $\frac{3}{4}$ is achieved by valid schemes that assign probability of exactly $\frac{1}{2m}$ to each $q_i$ and split the remaining probability between $e_1,...,e_k$. Bayes-plausibility implies that exactly one such scheme exists, assigning probability of $\frac{1}{2m}$ to each $q_i$ and probability of $\frac{1}{2k}$ to each $e_j$. Thus, there exists a single optimal valid scheme, which has support size exactly $k+m$.
\end{proof}
\begin{observation}
\label{O2}
One can modify our construction to make $u_s$ continuous.
\end{observation}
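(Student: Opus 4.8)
\textbf{Proof proposal for Observation~\ref{O2}.}

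The plan is to keep the overall architecture of the Proposition~\ref{P3} construction intact --- the same prior $p=(\frac1k,\dots,\frac1k)$, the same points $q_1,\dots,q_m$ and the same constraint functions $f_i$ with thresholds $c_i=\frac{1}{2m}$ --- and only replace the discontinuous $u_s$ by a continuous function $\hat u_s$ that agrees with the original at the finitely many ``special'' points $e_1,\dots,e_k,q_1,\dots,q_m$ and is strictly smaller everywhere else. Concretely, I would first choose $\delta>0$ small enough that the $\delta$-balls around the $k+m$ special points are pairwise disjoint and all contained in the interior of $\Delta(\Omega)$ except for the $e_j$, whose balls we intersect with the simplex. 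Then I would define $\hat u_s$ to be a ``tent'' function: it equals $1$ at each $q_i$ and decays continuously (say linearly in the Euclidean distance) to $0$ on the boundary of the ball of radius $\delta$ around $q_i$; it equals $\frac12$ at each $e_j$ and decays continuously to $0$ on the boundary of the corresponding ball; and it is identically $0$ outside the union of these balls. This $\hat u_s$ is continuous on $\Delta(\Omega)$ by construction (the pieces glue at value $0$), nonnegative, and satisfies $\hat u_s\le u_s$ pointwise with equality exactly at the $k+m$ special points.

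The key observation making the argument go through unchanged is that \emph{raising the utility off the special points cannot help}, because $\hat u_s$ lies below the old $u_s$; hence the upper bound of $\frac34$ on the value of any valid scheme, established in the proof of Proposition~\ref{P3} purely from the constraints $E[\Sigma,f_i]\le\frac{1}{2m}$ and the pointwise bound on the utility, still applies verbatim with $\hat u_s$ in place of $u_s$. For the matching lower bound I would exhibit the same scheme as before --- probability $\frac{1}{2m}$ on each $q_i$ and $\frac{1}{2k}$ on each $e_j$, which is Bayes-plausible since $\frac{1}{m}\sum q_i + \frac1k\sum\frac12 e_j \cdot (\text{weights})$ works out to $p$ exactly as in the original proof --- and note that it attains value $\frac34$ under $\hat u_s$ as well, since $\hat u_s$ agrees with $u_s$ at all the points in its support. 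Finally, for uniqueness of the optimal scheme: any scheme attaining $\frac34$ must put its entire mass on the set where $\hat u_s\in\{\frac12,1\}$, i.e.\ on $\{e_1,\dots,e_k,q_1,\dots,q_m\}$ (since $\hat u_s<\frac12$ off the balls and $\hat u_s\le\frac12$ throughout, with the value-$1$ locations being exactly the $q_i$ and the value-$\ge\frac12$ locations off the $q_i$ being exactly the $e_j$ --- one may shrink $\delta$ so that $\hat u_s<\frac12$ on the punctured $e_j$-balls too), and must saturate $E[\Sigma,f_i]=\frac{1}{2m}$; then Bayes-plausibility pins down the scheme exactly as in the proof of Proposition~\ref{P3}, so the unique optimum again has support size exactly $k+m$.

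I do not expect a genuine obstacle here; the only point requiring a little care is arranging the radii $\delta$ so that the decay regions of distinct special points do not overlap and so that, on each punctured $e_j$-ball, the function has already dropped strictly below $\frac12$ --- this guarantees that the ``value-$\frac12$-or-more'' level set of $\hat u_s$ is still exactly the finite set $\{e_1,\dots,e_k,q_1,\dots,q_m\}$, which is what the uniqueness argument needs. Everything else (continuity of the tent function, the upper bound computation, Bayes-plausibility of the candidate scheme) is routine and carries over mechanically from the proof of Proposition~\ref{P3}.
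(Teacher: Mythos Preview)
Your proposal has a genuine gap that breaks both the upper-bound step and the uniqueness step. You assert that the tent function $\hat u_s$ satisfies $\hat u_s\le u_s$ pointwise, but the inequality goes the \emph{other} way: the original $u_s$ is $0$ off the $k+m$ special points, whereas your $\hat u_s$ is strictly positive throughout the punctured balls around them. In particular, on a punctured neighborhood of each $q_i$ you have $\hat u_s$ close to $1$, well above $\tfrac12$, so the crucial ingredient of the Proposition~\ref{P3} upper bound --- namely that $u_s\le\tfrac12$ everywhere off $\{q_1,\dots,q_m\}$ --- fails for $\hat u_s$. Without it, a scheme placing mass just off $q_i$ may collect utility near $1$ while contributing strictly less than $1$ per unit mass to $E[\Sigma,f_i]$ (for a generic continuous $f_i$ with $f_i(q_i)=1$ and a local maximum there), and the ``verbatim'' $\tfrac34$ bound no longer follows. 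Your uniqueness paragraph inherits the same problem: the set $\{\hat u_s\ge\tfrac12\}$ is not the finite set $\{q_1,\dots,q_m,e_1,\dots,e_k\}$ but contains an entire punctured neighborhood of each $q_i$, so the claim that an optimal scheme must be supported on the special points is unjustified.

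This is precisely why the paper's hint does \emph{not} keep the $f_i$ fixed: it also modifies the constraint functions, requiring each $f_i$ to be strictly greater than $1$ on the deleted neighborhood of $q_i$ on which the new $u_s$ is nonzero. Then any mass placed near but not at $q_i$ is charged more than $1$ per unit by the $i$-th constraint while earning strictly less than $1$ in utility, so the arithmetic $\tfrac12\cdot 1+\tfrac12\cdot\tfrac12=\tfrac34$ and the equality analysis go through as before. Your construction can be repaired along these lines, but not by leaving the $f_i$ untouched and invoking a pointwise inequality that points the wrong way.
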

For example, one can make $u_s$ to quickly decrease to $0$ on all the rays originating from any $q_i$ or $e_j$ and require $f_i$ to be greater than $1$ in a deleted neighbourhood of $q_i$ on which $u_s$ is nonzero.
\section{Proof of Theorem~\ref{T2}}
\label{app:computational}
\begin{proof}[Proof of Theorem~\ref{T2}, part $1$]
Suppose that $u_s$ is either $O(1)$-Lipschitz or piecewise constant, having a constant pieces number, with each piece covering a convex polygon in $\Delta\parentheses*{\Omega}$ having a constant number of vertices. Then $u_s$ satisfies Assumption~\ref{A2} with $t\parentheses*{\frac{1}{\epsilon}}:=\frac{1}{\epsilon}$. Indeed, to define $u_{\epsilon,M}$, one can divide $\Delta\parentheses*{\Omega}$ to $\poly\parentheses*{\frac{1}{\epsilon}}$ simplices of diameters at most $\frac{\epsilon}{M}$ s.t.~the supremum and the infimum of $u_s$ on every simplex differ at most by $\epsilon$, and then set $u_{\epsilon,M}$ on every such simplex to be the supremum of $u_s$ on it.

An ex ante constraint specified by an $O(1)$-Lipschitz function trivially satisfies Assumption~\ref{A4} with $t\parentheses*{\frac{1}{\epsilon}}:=\frac{1}{\epsilon}$ -- simply define $g_i:=f_i$. Consider now an ex ante constraint specified by a function of the form:
\begin{equation*}
f_i\parentheses*{p_{\sigma}}:=b\cdot \sum_{1\leq j\leq l} \parentheses*{\sum_{\omega'\in\Omega_j} p_{\sigma}\brackets*{\omega'}}\ln\frac{\sum_{\omega'\in\Omega_j} p_{\sigma}\brackets*{\omega'}}{b_j},
\end{equation*}
where $\braces*{\Omega_j}_{1\leq j\leq l}$ is a partition of $\Omega$, $b_1,...,b_l>0$ are constants and $b$ is constant. We shall show that one can assume w.l.o.g.~that $f_i\parentheses*{p_{\sigma}}\equiv \sum_{\omega'\in\Omega} p_{\sigma}\brackets*{\omega'}\ln p_{\sigma}\brackets*{\omega'}$; then we shall prove that the corresponding ex ante constraint satisfies Assumption~\ref{A4} with $t\parentheses*{\frac{1}{\epsilon}}:=\frac{1}{\epsilon}$.

First, assume w.l.o.g.~that $b=1$: by dividing both $f_i$ and $c_i$ by $\absolute{b}$ we can assume $b\in\{-1,1\}$;\footnote{The case $b=0$ is trivial.} then note that if $g_i$ fits for $f_i$, then $-\epsilon-g_i$ fits for $-f_i$. Secondly, assume w.l.o.g.~that $l=k$ and $\absolute*{\Omega_j}=1$ for every $1\leq j\leq l$ -- just replace $\Omega$ with $\Omega':=\braces*{\Omega_j}_{1\leq j\leq l}$. Thirdly, by adding a linear function to $f_i$, assume w.l.o.g.~that $f_i\parentheses*{p_{\sigma}}\equiv\sum_{\omega'\in\Omega} p_{\sigma}\brackets*{\omega'}\ln p_{\sigma}\brackets*{\omega'}$; it is possible since adding an $O(1)$-Lipschitz function does not affect the satisfaction of Assumption~\ref{A4}.

Let $p'$ be the center of $\Delta\parentheses*{\Omega}$ and let $S_{\epsilon}$ be the contraction (homothety) of $\Delta\parentheses*{\Omega}$ centered at $p'$ with coefficient $\frac{1}{1+\epsilon^2}$. The restriction of $f_i$ to $S_{\epsilon}$ is $\poly\parentheses*{\frac{1}{\epsilon}}$-Lipschitz, since on $S_{\epsilon}$ one has: $||\bigtriangledown f_i\parentheses*{p_{\sigma}}||=O\parentheses*{\sum_{\omega'\in\Omega} \absolute*{\ln p_{\sigma}\brackets*{\omega'}}}=O\parentheses*{\sum_{\omega'\in\Omega}\frac{1}{p_{\sigma}\brackets*{\omega'}}}=\poly\parentheses*{\frac{1}{\epsilon}}$. Extend this restriction of $f_i$ to a function $\tilde{g_i}:\Delta\parentheses*{\Omega}\to\mathbb{R}$ s.t. for every $q\in\Delta\parentheses*{\Omega}\setminus S_{\epsilon}$, $\tilde{g_i}(q)$ equals the value of $f_i$ on the projection of $q$ onto the closed, convex and nonempty set $S_{\epsilon}$. Finally, set $g_i\equiv\tilde{g_i}-\frac{\epsilon}{2}$.

Then $g_i$ is $\poly\parentheses*{\frac{1}{\epsilon}}$-Lipschitz, since $f_i$ is $\poly\parentheses*{\frac{1}{\epsilon}}$-Lipschitz and projection on a closed, convex and nonempty set is $1$-Lipschitz. It remains to check that $0\leq f_i(q)-g_i(q)\leq\epsilon$ for every $q\in\Delta\parentheses*{\Omega}$. It is immediate for $q\in S_{\epsilon}$. Fix now $q\in \Delta\parentheses*{\Omega}\setminus S_{\epsilon}$. It is enough to show that $\absolute*{f_i(q)-\tilde{g_i}(q)}\leq\frac{\epsilon}{2}$. Indeed, by the definition of $\tilde{g_i}$, $\tilde{g_i}(q)=f_i\parentheses*{q'}$, where $q'$ is the projection of $q$ onto $S_{\epsilon}$. By the choice of $S_{\epsilon}$ we have $||q-q'||\leq\epsilon^2$. Therefore, the change in $f_i$ between $q$ and $q'$ is at most (for small enough $\epsilon$):
\begin{equation*}
    k\absolute*{\epsilon^2\ln\epsilon^2}=O\parentheses*{\epsilon^2\absolute*{\ln\epsilon}}=o\parentheses*{\epsilon},
\end{equation*}
so $\absolute*{f_i(q)-\tilde{g_i}(q)}=\absolute*{f_i(q)-f_i\parentheses*{q'}}=o\parentheses*{\epsilon}$. Thus, $f_i$ indeed satisfies Assumption~\ref{A4} with $t\parentheses*{\frac{1}{\epsilon}}:=\frac{1}{\epsilon}$, as desired.

We proved that $u_s$ and the constraints satisfy Assumptions~\ref{A2} and~\ref{A4} with $t\parentheses*{\frac{1}{\epsilon}}:=\frac{1}{\epsilon}$; thus, Theorem~\ref{T2}, part $1$ follows from Lemma~\ref{L3}.
\end{proof}
\begin{proof}[Proof of Theorem~\ref{T2}, part $2$]
Fix a constant $\epsilon>0$. If $u_s$ is piecewise constant, with a constant number of pieces, s.t.~each piece covers a convex polygon in $\Delta\parentheses*{\Omega}$ having a constant vertex number -- it satisfies Assumption~\ref{A2} with $t\parentheses*{\frac{1}{\epsilon}}:= 1$: to define $u_{\epsilon,M}$, one can just refine the pieces of $u_s$ by division to simplices of diameters at most $\frac{\epsilon}{M}$. If $u_s$ is continuous, then from the compactness of $\Delta\parentheses*{\Omega}$ and the Heine–Cantor theorem, we get that $u_s$ is uniformly continuous. Therefore, $u_s$ satisfies Assumption~\ref{A2} with $t\parentheses*{\frac{1}{\epsilon}}:= 1$: to define $u_{\epsilon,M}$, one can divide $\Delta\parentheses*{\Omega}$ to simplices of small enough diameters; then one should define $u_{\epsilon,M}$ on every such simplex to be the supremum of $u_s$ on~it.

Furthermore, each ex ante constraint satisfies Assumption~\ref{A4} with $t\parentheses*{\frac{1}{\epsilon}}:= 1$. Indeed, given a continuous $f_i$, the compactness of $\Delta\parentheses*{\Omega}$ and the Heine–Cantor theorem implies that $f_i$ is uniformly continuous. To define $g_i$, one should divide $\Delta\parentheses*{\Omega}$ to simplices of small enough diameter; then one should temporarily set $g_i\equiv f_i$ on the vertices of the simplices and extend $g_i$ linearly on each simplex; finally, one should slightly shift down $g_i$ so that it is never above $f_i$.

Therefore, $u_s$ and the constraints satisfy Assumptions~\ref{A2} and~\ref{A4} with $t\parentheses*{\frac{1}{\epsilon}}:= 1$; hence, Theorem~\ref{T2}, part $2$ follows from Lemma~\ref{L3}.
\end{proof}
\section{Proposition~\ref{P2} -- Proof and Similar Results}
\label{app:auctions}
In this appendix, we formulate and prove the technical Lemma~\ref{L2}; we use it to prove Proposition~\ref{P2} and to demonstrate analogous results for other auction settings, including revenue maximization in single-item, second-price auctions and welfare maximization in sponsored search (slot) auctions.
\begin{lemma}
\label{L2}
Fix $1\leq j\leq n$ and $n$ linear functions $g_1,...,g_n: \Delta\parentheses*{\Omega}\to\mathbb{R}_{\geq 0}$. Define $g^j:\Delta\parentheses*{\Omega}\to\mathbb{R}_{\geq 0}$ by setting $\g^{j}\parentheses*{y}$, for every $y\in \Delta\parentheses*{\Omega}$, to be the $j$-th maximal number among $g_1\parentheses*{y},...,g_n\parentheses*{y}$. Then $g^j$ satisfies Assumption~\ref{A1} with $M=2$.
\end{lemma}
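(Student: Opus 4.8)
The plan is to argue directly from the superlevel-set description of an order statistic. Write $a := g^{j}\parentheses*{p_{\sigma_1}}$, $b := g^{j}\parentheses*{p_{\sigma_2}}$ and $p := \lambda p_{\sigma_1} + (1-\lambda)p_{\sigma_2}\in\Delta\parentheses*{\Omega}$; since $\Delta\parentheses*{\Omega}$ is convex, $p$ is in the domain, and the goal dictated by Assumption~\ref{A1} is to show $\lambda a + (1-\lambda)b \le 2\,g^{j}(p)$. The elementary fact I would rely on is that, reading ``the $j$-th maximal number among $g_1(y),\dots,g_n(y)$'' as the $j$-th entry of the non-increasing rearrangement of these numbers (so ties are counted with multiplicity), one has $g^{j}(y)\ge v$ if and only if at least $j$ of the numbers $g_1(y),\dots,g_n(y)$ are $\ge v$. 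In particular the index set $A := \{\, i : g_i(p_{\sigma_1}) \ge a \,\}$ satisfies $\absolute*{A}\ge j$, and likewise $B := \{\, i : g_i(p_{\sigma_2}) \ge b \,\}$ satisfies $\absolute*{B}\ge j$.

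Next I would combine linearity with nonnegativity of the $g_i$. Each $g_i$ is affine along the segment $\brackets*{p_{\sigma_1},p_{\sigma_2}}$ — this uses only $\lambda + (1-\lambda) = 1$, so it holds whether ``linear'' is read as linear or affine — and $g_i \ge 0$ on all of $\Delta\parentheses*{\Omega}$; hence for every $i\in A$,
\[
g_i(p) = \lambda\,g_i(p_{\sigma_1}) + (1-\lambda)\,g_i(p_{\sigma_2}) \ge \lambda a,
\]
and symmetrically $g_i(p) \ge (1-\lambda)b$ for every $i\in B$. Now $\max\{\lambda a,\,(1-\lambda)b\} \ge \tfrac12\bigl(\lambda a + (1-\lambda)b\bigr)$, and the index set attaining this maximum — $A$ or $B$ — has at least $j$ elements. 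Therefore at least $j$ of the numbers $g_1(p),\dots,g_n(p)$ are $\ge \tfrac12\bigl(\lambda a + (1-\lambda)b\bigr)$, so by the characterization above $g^{j}(p)\ge \tfrac12\bigl(\lambda a + (1-\lambda)b\bigr)$, i.e.\ $2\,g^{j}(p)\ge \lambda\,g^{j}(p_{\sigma_1}) + (1-\lambda)\,g^{j}(p_{\sigma_2})$, which is precisely Assumption~\ref{A1} with $M=2$.

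I do not expect a genuine obstacle: the single idea that makes this go through is that the permitted multiplicative slack of $2$ lets us discard the nonnegative contribution of one endpoint (either $(1-\lambda)g_i(p_{\sigma_2})$ or $\lambda g_i(p_{\sigma_1})$) while still retaining a full $j$-element superlevel set at $p$, and $g^{j}$ is governed by the cardinality of that set. The only points needing care are fixing the tie-breaking convention for ``$j$-th maximal'' so the superlevel characterization is exact, and keeping track of which of $\lambda a$, $(1-\lambda)b$ dominates. As a sanity check, for $j=1$ the function $g^{1}=\max_i g_i$ is convex, so the bound even holds with $M=1$ (and $g^{n}=\min_i g_i$ is concave); for intermediate $j$ the factor $2$ is genuinely needed — e.g.\ $n=3$, $j=2$ with $g_1\equiv 1$, $g_2(p)=p[\omega_1=0]$, $g_3(p)=p[\omega_1=1]$ on $\Delta(\{0,1\})$ attains equality at the midpoint of $e_1,e_2$ — so this is exactly the bound the argument delivers.
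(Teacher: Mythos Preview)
Your proof is correct and follows essentially the same approach as the paper's: both arguments assume without loss of generality that one of $\lambda g^{j}(p_{\sigma_1})$, $(1-\lambda)g^{j}(p_{\sigma_2})$ dominates, identify a set of at least $j$ indices achieving the $j$-th maximum at that endpoint, use nonnegativity of the $g_i$ to drop the other endpoint's contribution, and conclude via linearity that at least $j$ of the values $g_i(p)$ exceed the required threshold. Your superlevel-set phrasing and the paper's ordering convention are just two ways of packaging the same step.
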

\begin{proof}
Fix $y,z\in \Delta\parentheses*{\Omega}$ and $\lambda\in[0,1]$. We have to prove:
\begin{equation*}
\lambda g^j\parentheses*{y} +
(1-\lambda) g^j\parentheses*{z}\leq 2 g^j\parentheses*{\lambda y +(1-\lambda) z}.
\end{equation*}
Assume w.l.o.g.~that $\lambda g^j\parentheses*{y}\geq (1-\lambda) g^j\parentheses*{z}$ and that $g_1\parentheses*{y}\geq g_2\parentheses*{y}\geq...\geq g_n\parentheses*{y}$. Then we get:
\begin{align*}
&\lambda g^j\parentheses*{y} +(1-\lambda) g^j\parentheses*{z}\leq 2\lambda g^j\parentheses*{y}=2\lambda g_j (y)=2\lambda\min_{1\leq i\leq j}\braces*{g_i (y)}\leq 2\min_{1\leq i\leq j}\braces*{\lambda g_i (y)+ (1-\lambda)g_i (z)}=_{\parentheses*{*}}\\
&2\min_{1\leq i\leq j}\braces*{g_i\parentheses*{\lambda y +(1-\lambda) z}}\leq2 g^j\parentheses*{\lambda y +(1-\lambda) z},
\end{align*}
where $\parentheses*{*}$ follows from the linearity of the $g_i$s.
\end{proof}
Now we prove -- using Lemma~\ref{L2} -- Proposition~\ref{P2} and discuss analogous results for other auction settings.
\begin{proof}[Proof of Proposition~\ref{P2}]
Note that $u_s$ is an expectation of maximum of linear nonnegative functions. By Lemma~\ref{L2}, every term in the expectation satisfies Assumption~\ref{A1} with $M=2$; thus, the same holds for the expectation.
\end{proof}
{\bf Similar results.} The applications of Theorem~\ref{T5} go beyond the setting from Section~\ref{sec:vs}, which was chosen for the sake of simplicity. In particular, the revenue in single-item, second-price auctions and the welfare in sponsored search (slot) auctions are also expectations of certain linear combinations of functions of the form $g^j$ as described in Lemma~\ref{L2}. Therefore, they too satisfy Assumption~\ref{A1} with $M=2$.
\section{Proof of Proposition~\ref{P1}}
\label{app:vs}
\begin{proof}
\begin{enumerate}
\item Fix $m$ and $k=2^m$. We shall show that the analysis from Section~\ref{sec:vs} is tight for $M=2$.

Take $\Omega:=\braces*{0,1}^m$ and $u_s\parentheses*{p_{\sigma}}:=||p_{\sigma}||_{\infty}$; let $p$ be uniform over $\Omega$. By Lemma~\ref{L2} (with $j=1$ and $g_{\omega_0}:=p_{\sigma}\brackets*{\omega=\omega_0}$ for every $\omega_0\in\Omega$), $u_s$ satisfies Assumption~\ref{A1} with $M=2$. Define for every $1\leq i\leq m$: $f_i\parentheses*{p_{\sigma}}:=p_{\sigma}\brackets*{\omega_i=1}=\sum_{\omega'=\parentheses*{\omega'_1,...,\omega'_m}\in\Omega: \omega'_i=1} p_{\sigma}\brackets*{\omega=\omega'}$ and $c_i:=\frac{1}{2}$. For every $p_{\sigma}\in\Delta\parentheses*{\Omega}$, denote by $R\brackets*{p_{\sigma},i}$ the posterior obtained from $p_{\sigma}$ by assigning to each $\omega'\in\Omega$ the probability assigned by $p_{\sigma}$ to the state of nature obtained from $\omega'$ by reversing its $i$-th bit.

Consider the following $m$ runs of Algorithm~\ref{alg:a1} for the $m$ constraints. Start with $\Sigma$ representing full revelation (which is valid under the ex ante constraints specified by the $f_i$s and the $c_i$s). Then on the $i$-th run of Algorithm~\ref{alg:a1}, pool every posterior $p_{\sigma}$ with $R\brackets*{p_{\sigma},i}$.

Inductively, just before two posteriors are pooled together, they have equal probability weights in the signaling scheme; therefore, their probability weights are moved entirely to the new posterior that the pooling creates. Note that for every $1\leq i\leq m$, the $i$-th run of Algorithm~\ref{alg:a1} is legal, since it only pools posteriors having $f_i=0$ with posteriors having $f_i=1$; furthermore, at the end of the $i$-th run, all the posteriors in $\supp\parentheses*{\Sigma}$ have $f_i=\frac{1}{2}$. Moreover, inductively, at the end of the $i$-th run, every posterior in $\supp\parentheses*{\Sigma}$ specifies deterministically the last $m-i$ bits of $\omega$ and induces a uniform distribution on $\braces*{0,1}^i$ for the prefix of length $i$ of $\omega$.

Therefore, after the $m$-th run, we end with $\supp\parentheses*{\Sigma}=\{p\}$ (i.e., the no revelation policy), yielding expected Sender's utility of $\frac{1}{k}$. We started with the full revelation policy, yielding utility of $1$; thus, the total multiplicative utility loss is $k=2^m=M^m$.

\item Assume that $m=1$ and fix $M\geq 1$. We shall define $u_s$ satisfying Assumption~\ref{A1} with $M$ and an ex ante constraint outperforming the corresponding ex post constraint by a multiplicative factor of $M$.

Take: $\Omega:=\braces*{0,1}$; $p$ uniform over $\Omega$; $f:=p_{\sigma}\brackets*{\omega=1}$; $c:=\frac{1}{2}$; and $u_s\parentheses*{p_{\sigma}}:=\frac{1}{M}+\\\absolute*{p_{\sigma}\brackets*{\omega=1}-\frac{1}{2}}\cdot \frac{2\parentheses*{M-1}}{M}$.

Then $u_s\parentheses*{p_{\sigma}}\in\brackets*{\frac{1}{M},1}$ for every $p_{\sigma}\in\Delta\parentheses*{\Omega}$; thus, $u_s$ satisfies Assumption~\ref{A1} with $M$. Furthermore, $f$ is linear, thus convex. Under the ex post constraint specified by $f$ and $c$, the only valid signaling scheme has support $\{p\}$; hence, the optimal expected Sender's utility is $\frac{1}{M}$; the corresponding ex ante constraint allows full revelation, which yields expected Sender's utility of $1$. Therefore, we have a multiplicative gap of $M$ between the two constraint types.

\item Fix $m$ and $k=m+1$. We shall prove that the multiplicative gap between ex ante and ex post constraints can be $m+1$ for $M=2$.

Take $\Omega:=\braces*{1,...,k}$ with $p$ uniform on $\Omega$; set $u_s\parentheses*{p_{\sigma}}:=||p_{\sigma}||_{\infty}$, $f_i:=p_{\sigma}\brackets*{\omega=i}$ and $c_i:=\frac{1}{k}$ ($1\leq i\leq m$). As explained in our proof of part $1$, $u_s$ satisfies Assumption~\ref{A1} with $M=2$.

On the one hand, for the ex post constraints specified by the $f_i$s and the $c_i$s, the only valid signaling scheme has support $\{p\}$, yielding expected Sender's utility of $\frac{1}{k}$. On the other hand, the corresponding ex ante constraints allow full revelation, yielding utility of 1. Thus, we get a multiplicative gap of $k=m+1$.
\end{enumerate}
\end{proof}
\end{document}